\documentclass[11pt]{article}

\pdfoutput=1 

\usepackage[utf8]{inputenc}
\usepackage[left=1in, right=1in, top=1in, bottom=1in]{geometry}

\usepackage[title]{appendix}

\usepackage[usenames,dvipsnames,svgnames,table]{xcolor}
\definecolor{NewBlue}{rgb}{0.2,0.2,0.8}

\usepackage[
  ruled,          
  vlined,         
  linesnumbered,  
  commentsnumbered,
  inoutnumbered,
  titlenumbered
]{algorithm2e}
\usepackage{authblk}
\usepackage{tcolorbox}

\newtcolorbox{boxtext}{
  colback=gray!10,
  colframe=black,
  boxrule=0.5pt,
  arc=2pt,
  left=6pt,
  right=6pt,
  top=6pt,
  bottom=6pt
}

\usepackage{caption}
\usepackage{subcaption}
\usepackage{subfiles}
\usepackage{tikz}
\usepackage{comment}
\usetikzlibrary{arrows.meta}
\usetikzlibrary{arrows, automata}
\usetikzlibrary{calc}
\usetikzlibrary{shapes.geometric}
\usepackage{amssymb,amsmath,amsthm}
\usepackage{bbm}
\usepackage{xspace}
\usepackage{hyperref}

\newtheorem{definition}{Definition}
\newtheorem{theorem}{Theorem}
\newtheorem{corollary}{Corollary}
 \newtheorem{lemma}{Lemma}

\usepackage[capitalise]{cleveref}
\Crefname{equation}{Eq.}{Eqs.}
\Crefname{figure}{Fig.}{Figs.}
\Crefname{tabular}{Tab.}{Tabs.}

\DeclareMathOperator\tw{tw} 

\graphicspath{{./images/}}

\newcommand{\myAlgorithm}{\textsc{TreeTrimmer}\xspace}
\newcommand{\ourAlgorithm}{\textsc{HedgeTrimmer}\xspace}

\usepackage{etoolbox}
\usepackage{appendix}
\AtBeginEnvironment{appendices}{\crefalias{section}{appendix}}

\date{\vspace{-5ex}}
\title{Efficient Algorithms for Interdicting Facilities in Trees and Bounded Treewidth Graphs\footnote{This work was supported in part by the U.S. National Science Foundation (NSF) CNS-1816887 award.}} 

\author{Ali Abbasi\footnote{University of Southern California, Department of Computer Science, Los Angeles, CA 90089. \href{mailto:abbasia@usc.edu}{\texttt{abbasia@usc.edu}}.}
\qquad
Eli Friedman\footnote{University of Utah, Department of Computer Science, Salt Lake City, UT 84112. \href{mailto:eli.friedman@utah.edu}{\texttt{eli.friedman@utah.edu}}. Part of this work was done while an intern at the University of Southern California.}
\qquad
Leana Golubchik\footnote{University of Southern California, Department of Electrical and Computer Engineering, Los Angeles, CA 90089. \{\href{mailto:leana@usc.edu}{\texttt{leana}},\href{mailto:paolieri@usc.edu}{\texttt{paolieri}}\}\texttt{@usc.edu}.}
\qquad
Samir Khuller\footnote{Northwestern University, Department of Computer Science, Evanston, IL 60208. \href{mailto:samir.khuller@northwestern.edu}{\texttt{samir.khuller@northwestern.edu}}.}
\\
Marco Paolieri$^\S$
}

\hypersetup{colorlinks,linkcolor={NewBlue},citecolor={NewBlue},urlcolor={black}}  

\makeatletter
\def\@footnotecolor{black}
\define@key{Hyp}{footnotecolor}{%
 \HyColor@HyperrefColor{#1}\@footnotecolor%
}
\patchcmd{\@footnotemark}{\hyper@linkstart{link}}{\hyper@linkstart{footnote}}{}{}
\makeatother
\hypersetup{footnotecolor=black}

\begin{document}
\maketitle
\begin{abstract}

Given a graph $G$ of $n$ nodes partitioned into facilities and customers, the \emph{$r$-edge interdiction covering problem} (REIC) is to remove up to $r$ edges so as to maximize the total weight of customers disconnected from all facilities, which is called the covering objective function. 
While REIC is known to be NP-complete for general graphs, Fröhlich and Ruzika show that the problem can be solved in polynomial time when $G$ is a tree, providing an $O(n^7 r)$-time algorithm.
%
We give an efficient $O(nr^2)$-time dynamic programming algorithm for REIC on trees that is fixed-parameter linear in $n$.
Evaluating our solution on a benchmark of randomly generated tree networks with baselines of the Fröhlich and Ruzika algorithm and the Gurobi integer program solver, we demonstrate that in practice, our algorithm is both significantly faster and less sensitive to network topology and size.

We extend our algorithm for REIC to graphs of bounded treewidth, a well-studied family of sparse graphs that generalizes trees, and obtain a matching runtime of $O(nr^2)$.
We also consider the \emph{$r$-facility interdiction covering problem} (RFIC), a novel variant of this network interdiction problem  where the goal is to remove up to $r$ facilities to maximize the covering objective function over disconnected customers.
We show that RFIC is NP-complete by observing it generalizes the small set bipartite vertex expansion problem (SSBVE), also known as the minimum $p$-union problem.
We give an $O(nr^2)$-time algorithm for RFIC on trees, which also gives an $O(n^3)$-time algorithm for SSBVE on trees.

\end{abstract}

\newpage

\section{Introduction}\label{sec:intro}

The problem of network interdiction has a rich history rooted in the famous max-flow min-cut theorem~\cite{FordF56}, which established the relationship between the network flow capacity and the critical set of edges that support it.
Later, the problem of interdicting the maximum flow with a limited budget of edge removals was addressed~\cite{Wollmer64,Phillips93, Burch2003}, which is known as the \emph{network flow interdiction problem}\footnote{We are given a flow network where each edge is associated with both a cost and a capacity. Given a budget $B$, we need to delete edges of total cost at most $B$ to maximally reduce the max $s$-$t$ flow.}~(NFI).
In addition to maximum flow, many other network properties have been considered as the objective of network interdiction.
These include minimum weight spanning trees~\cite{LiangS97,FredericksonS99, BazganTree}, maximum weight matchings~\cite{Zenklusen10a}, and shortest paths between two nodes~\cite{FulkersonH77, CorleyS82, IsraeliW02, bar-noy2001}.

Location science also considers related optimization problems to evaluate the robustness of facility locations in supply-demand networks~\cite{SnyderAPRSS16}.
In the placement of $p$ facilities, problems of interest include the \emph{$p$-median problem}~\cite{BazganPmedian,Hakimi64pmedian}, i.e., finding facilities that minimize the sum of distances to customer nodes, and the \emph{$p$-covering problem}~\cite{church1974maximal}, i.e., maximizing the total demand of customer nodes within the covering radius of a facility.
Previous work proposes the corresponding interdiction problems, respectively, the \emph{$r$-interdiction median problem}~(RIM) and the \emph{$r$-interdiction covering problem}~(RIC), as a way to assess the robustness of facility locations when $r$~facilities can be removed adversarially~\cite{ChurchSM04}.
Further works in this area explore facility removal in hub networks~\cite{RamamoorthyJSV18,GhaffariA18} as well as fortification against facility removal.

In contrast, we consider a network interdiction problem where edges are removed to prevent customer nodes from reaching facilities, which we call the \emph{$r$-edge interdiction covering problem}~(REIC).
We now define the problem, letting $C_G(v,R)=1$ indicate that customer $v \in V\setminus S$ can reach some facility $s\in S$ in the graph $G\setminus R$ for \emph{interdiction strategy} $R\subset E$.
\begin{definition}[REIC]
    In the $r$-edge interdiction covering problem, we are given a graph $G=(V,E)$ with facilities $S\subset V$, customer weights $w:V\setminus S\rightarrow \mathbb{R}_{\geq 0}$, and an integer budget $r<|E|$, with the goal of selecting a subset $R\subset E$ with $|R|\leq r$ maximizing $\sum_{v\in V\setminus S}w(v)(1-C_G(v,R))$.
\end{definition}
Thus, the goal is to remove $r$ edges to minimize the \emph{covering objective function}. By adding a super source and sink connected to all facilities and customers, respectively, we can view this problem as a special case of network flow interdiction.\footnote{Set interdiction cost to be $1$ for original edges in the graph, $\infty$ otherwise. Set capacity of customer-sink edges to be $w(v)$ for customer $v$, $\infty$ otherwise.}
However, NFI is hard even to approximate well~\cite{Phillips93,ChestnutZenklusen2017}, with best known approximation ratio of $2(n-1)$~\cite{ChestnutZenklusen2017}.

While $r$-edge interdiction covering has received less attention than other interdiction problems, for general graphs, Fröhlich and Ruzika~\cite{FrohlichR21} showed that REIC is NP-complete.
For trees, the same authors provided a polynomial-time algorithm~\cite{FrohlichR22}, dividing the graph into smaller components where facilities appear only on leaves and then combining partial solutions with a knapsack-type approach.
The exact runtime is not specified, but on a tree with $n$ nodes and budget $r$, our analysis suggests that the worst-case running time is $O(n^7r)$.\footnote{This occurs when a connected cluster of customers has size $O(n)$ and each customer is attached to a facility.}

In this paper, we improve this runtime to $O(nr^2)$ with \myAlgorithm, an algorithm based on a knapsack-type dynamic programming (DP) approach, processing the input tree bottom-up and solving a multiple-choice knapsack problem~\cite{Kellerer04} at each node. 
Our algorithm is fixed-parameter linear (FPL, \cite{Cygan2015ParameterizedA}) in the parameter $r$, meaning its time complexity is linear in $n$ when the budget $r$ is constant.
We then extend our DP approach for $r$-edge interdiction covering to bounded treewidth graphs, a well-studied family of sparse graphs. 
Informally, treewidth~$k$ is a measure of how far a graph is from being a tree.
Our algorithm \ourAlgorithm  is fixed-parameter tractable (FPT, \cite{Cygan2015ParameterizedA}) in the parameter $k$, running in time $O(nr^2)$ when treewidth is bounded by a constant, thus matching our result for trees.

We also consider a formulation of RIC on graphs, which we call the \emph{$r$-facility interdiction covering problem}~(RFIC).
This problem is similar to REIC, except that the interdiction strategy removes facilities rather than edges from the graph.
Modifying our DP algorithm for REIC on trees, we obtain an $O(nr^2)$-time algorithm for $r$-facility interdiction covering on trees.
Additionally, we observe that RFIC can be viewed as a weighted generalization of the \emph{small set bipartite vertex expansion problem}\footnote{In this problem, the goal is to select a subset $S$ of $k$ vertices from the left side of a bipartite graph minimizing the neighborhood's cardinality $|N(S)|$. In \cite{SSBVE2}, the authors show SSBVE is also equivalent to the minimum $p$-union problem.} (SSBVE). 
An $O(\sqrt{n})$-approximation for SSBVE was presented in~\cite{SSBVE1} and then improved to an $O(n^{1/4+\varepsilon})$-approximation in~\cite{SSBVE2}, which is believed to be optimal under the ``Dense versus Random'' conjecture for the densest $k$-subgraph and smallest $p$-edge subgraph problems~\cite{SSBVE2,BhaskaraHardness, Manurangsi}.
%
%
In this work, we give an approximation-preserving reduction from SSBVE to RFIC, showing that RFIC is
as hard to approximate as SSBVE and hence smallest $p$-edge subgraph (when viewed as minimizing the weight of covered customers);
this reduction also gives an $O(n^3)$-time exact algorithm for SSBVE on trees.
%
Finally, we give a parameterized reduction directly from $\textsc{Clique}$ to show that RFIC is W[1]-hard and as hard to approximate as densest $k$-subgraph (when viewed as maximizing the weight of disconnected customers).

One interesting application that motivates an efficient solution to this interdiction problem is robust network design of modern Low Earth Orbit (LEO) satellite constellations (e.g., Starlink \cite{starlink} and Amazon Kuiper \cite{kuiper}) that are capable of establishing intersatellite links (ISLs).
In this setting, facilities model ground stations and customers model satellites receiving data from users or collecting remote sensing data for Earth observation.
A customer's weight captures the amount or value of data received by a satellite, which needs to be delivered to ground stations for routing (in case of broadband access) or analysis (in case of remote sensing); here, our covering interdiction problems quantify the vulnerability of a constellation, maximizing the total amount of data that can be prevented from reaching the ground.
%
%
Each satellite can maintain only a few ISLs (e.g., Starlink supports 4), used to connect with other satellites on the same orbital plane (e.g., the preceding and following satellites) or on nearby orbital planes (e.g., moving in the same direction on the closest orbital planes) \cite{ChaudhryY21}.
Thus, ISLs result in very sparse graphs, and multicast spanning trees can be used to monitor and control applications for satellite communication \cite{EkiciAB02,BonuccelliMP04,HuXXDDP22}.
While there is recent literature on the vulnerability of satellite constellations to various attack or failure scenarios \cite{XuGL23,WangLLL24}, given their sparsity, we believe that our algorithms for trees and bounded treewidth graph can be used to efficiently evaluate their robustness to interdiction.

Initially, we considered satellite constellations with only 2 ISLs, which form graphs of paths and cycles; these simple graphs led us to our algorithm for trees.
While it is not practical to assume satellite ISLs form trees, we believe that our bounded treewidth assumption is well-motivated by the Walker Delta Constellation \cite{walker,mclaughlin2023xgrid}, where satellites are organized into $k$ orbital planes and maintain intra-plane ISLs to their leading and trailing neighbors. This forms a ring when the plane is fully populated, or a path if the satellites do not fully encircle the Earth. Adjacent orbital planes also maintain inter-plane ISLs: each satellite connects to (at most) one satellite in each neighboring plane, forming a sparse grid-based topology in which each satellite has degree at most $4$. Ground stations can then be modeled as additional nodes that attach to satellites currently in view, contributing leaf-like connections without changing the constellation's grid structure. Under this grid-based abstraction of the Walker Delta Constellation, the network resembles a $k \times n$ grid, where $n$ is the number of satellites in each plane and $k$ is the number of planes. Since a $k \times n$ grid has treewidth $\min(k,n)$, the treewidth of the constellation graph scales only with $k$ for $k<n$, and hence our algorithms can be used in practice when the number of orbital planes is small.

%
%

In summary, \emph{our main contributions} are as follows.
\begin{itemize}
    \item In \cref{sec:edge}, we improve upon the $O(n^7r)$-time algorithm of Fr{\"o}hlich and Ruzika \cite{FrohlichR22} and provide an $O(nr^2)$-time DP algorithm for solving $r$-edge interdiction covering on trees, which is FPL in the budget~$r$. We then evaluate our algorithm for REIC on trees in \cref{sec:experimental}, using the Gurobi integer program solver and \cite{FrohlichR22} as baselines across randomly generated problem instances. Our numerical results demonstrate that our algorithm is significantly faster in practice and more robust to variations in tree and instance structure.
    \item \Cref{sec:bddtw} extends our algorithm for REIC on trees to give an FPT algorithm for bounded treewidth graphs, achieving the same $O(nr^2)$-time complexity when the treewidth is constant. We also compare our results to the meta-algorithm of Courcelle's theorem and highlight how our DP algorithms are less constrained in instance structure while also giving strong evidence favoring our algorithms' performance in practice.
    \item In \cref{sec:facility}, we propose $r$-facility interdiction covering, a novel variant with facility rather than edge removal. \cref{sec:facility-DP} modifies our algorithm for REIC on trees to give an $O(nr^2)$-time algorithm for RFIC on trees. We give a reduction from small set bipartite vertex expansion to RFIC in \cref{sec:SSBVE}, resulting in an $O(nk^2)$-time algorithm for SSBVE on trees and hardness of approximation for RFIC-min. We show that RFIC is W[1]-hard in \cref{sec:hardness}, along with hardness of approximation for RFIC-max.
\end{itemize}

\section{TreeTrimmer: a DP Algorithm for REIC on Trees}\label{sec:edge}
Let $T=(V,E)$ be a tree of size $|V|=n$ with facilities $S\subset V$ and non-negative customer weights $w$.

For solving REIC on $T$ with budget $r$, we propose \myAlgorithm, a DP algorithm which traverses the tree upwards from leaves towards an arbitrarily chosen root $v^*$. At each inner node $v$, our algorithm maximizes the total weight of customers disconnected in the subtree rooted at $v$. It does so under every combination of two binary conditions: an \emph{assumption} $X$ provided by the parent node and a \emph{guarantee} $Y$ enforced in the subtree rooted at $v$.

Specifically, we define two possible assumptions $X$ regarding $v$'s parent node $w$:
\begin{itemize}
	\item $X=0$ when \emph{no} facility is reachable from $v$ through $w$;
	\item $X=1$ when \emph{a} facility is reachable from $v$ through $w$.
\end{itemize}
As for condition $Y$, if we consider only the subtree rooted at $v$ (i.e., without traversing the parent $w$ to reach facilities), two guarantees can be enforced through edge removals:
\begin{itemize}
    \item $Y=0$ when \emph{no} facility is reachable from $v$ in the subtree of $v$;
    \item $Y=1$ when \emph{a} facility is reachable from $v$ in the subtree of $v$;
\end{itemize}
We denote by $V_{XY}(v,b)$ the maximum objective (total weight of disconnected customers) achievable in the subtree rooted at $v$ under the conditions $XY$ with budget $b$,
set to $-\infty$ when no feasible solution exists for the given conditions, budget, and node.
Note that the $XY$'s are exhaustive: any interdiction strategy $R$ inherently fixes conditions $XY$ and budget~$b$ for each node and its subtree. 
Thus, by taking the maximum value of $V_{XY}(v^*,r)$ for all conditions $XY\in\{00,01,10,11\}$, we find a solution with optimal interdiction value.

\def\scalefactor{0.65}
\begin{figure*}[t]
\centering
\captionsetup[subfloat]{labelformat=simple}
\scalebox{0.85}{%
\subfloat[$X=0, Y=0$\label{fig:1a}]{%
    \scalebox{\scalefactor}{ 
\begin{tikzpicture} [square/.style={regular polygon,regular polygon sides=4}]
\begin{scope}[every node/.style={circle,thick,draw,minimum size = 2 em}]
    \node (u1) at (2,1) {$u_1$};
    \node [fill = gray, fill opacity = 0.35, text opacity=1] (v) at (3,3) {$v$};
\end{scope}
\begin{scope}[every node/.style={square,thick,draw,minimum size = 2 em}]
    \node (w) at (4,5) {$w$};
    \node (u2) at (4,1) {\hphantom{$w$}};
\end{scope}
\node (z1) at (1,-1) {};
\node (z2) at (3,-1) {};
\node (v2) at (5,3) {};
\node (u2label) at (u2) {$u_2$};

\node [scale = 1.6] (label) at (2.1,4.35) {$V_{00}$};
\path [->, shorten <= -0.15em, shorten >=0.1em] (label) edge[bend right=30] (v);

\node (comment1) at (5.5,4.1) {$X=0$};

\node (comment2) at ($(u2)+(0,-1.2)$) {$Y=0$};

\begin{scope}[every edge/.style={draw=black, thick}]
    \path [-] (u1) edge (v);
    
    \path [dotted] (w) edge node [sloped] {\Large $\times$} (v);
    
    \path [dotted] (u2) edge node [sloped] {\Large $\times$} (v);

    \path [-] (u1) edge (z1);
    \path [-] (u1) edge (z2);
    \path [-] (w) edge (v2);
        
    \draw [dashed] plot [smooth] coordinates {(0,-1)  (2,3)  (3,3.75)  (4,3) (5.5,-1)};
     \draw [dashed] plot [smooth] coordinates {(3.1,5.4)  (3.45,4.45)  (4.15,3.7)  (5.5,1) (6,0.6)};
\end{scope}

\end{tikzpicture}}}%
\hspace{1em}%
\subfloat[$X=0, Y=1$\label{fig:1b}]{%
    \scalebox{\scalefactor}{\begin{tikzpicture} [square/.style={regular polygon,regular polygon sides=4}]
\begin{scope}[every node/.style={circle,thick,draw,minimum size = 2 em}]
    \node (u1) at (2,1) {$u_1$};
    \node [fill = gray, fill opacity = 0.35, text opacity=1] (v) at (3,3) {$v$};
\end{scope}
\begin{scope}[every node/.style={square,thick,draw,minimum size = 2 em}]
    \node (w) at (4,5) {$w$};
    \node [fill = blue, fill opacity = 0.45, text opacity=1] (u2) at (4,1) {\hphantom{$w$}};
\end{scope}
\node (z1) at (1,-1) {};
\node (z2) at (3,-1) {};
\node (v2) at (5,3) {};
\node (u2label) at (u2) {$u_2$};

\node [scale = 1.6] (label) at (2.1,4.35) {$V_{01}$};
\path [->, shorten <= -0.15em, shorten >=0.1em] (label) edge[bend right=30] (v);

\node (comment1) at (5.5,4.1) {$X=0$};

\node (comment2) at ($(u2)+(0,-1.2)$) {$Y=1$};

\begin{scope}[every edge/.style={draw=black, thick}]
    \path [-] (u1) edge (v);
    
    \path [dotted] (w) edge node [sloped] {\Large $\times$} (v);
    
    \path [-] (u2) edge (v);

    \path [-] (u1) edge (z1);
    \path [-] (u1) edge (z2);
    \path [-] (w) edge (v2);
        
    \draw [dashed] plot [smooth] coordinates {(0,-1)  (2,3)  (3,3.75)  (4,3) (5.5,-1)};
     \draw [dashed] plot [smooth] coordinates {(3.1,5.4)  (3.45,4.45)  (4.15,3.7)  (5.5,1) (6,0.6)};
\end{scope}

\end{tikzpicture}}}%
\hspace{1em}%
\subfloat[$X=1, Y=0$\label{fig:1c}]{%
    \scalebox{\scalefactor}{\begin{tikzpicture} [square/.style={regular polygon,regular polygon sides=4}]
\begin{scope}[every node/.style={circle,thick,draw,minimum size = 2 em}]
    \node (u1) at (2,1) {$u_1$};
    \node [fill = gray, fill opacity = 0.35, text opacity=1] (v) at (3,3) {$v$};
\end{scope}
\begin{scope}[every node/.style={square,thick,draw,minimum size = 2 em}]
    \node [fill = blue, fill opacity = 0.45, text opacity=1] (w) at (4,5) {$w$};
    \node (u2) at (4,1) {\hphantom{$w$}};
\end{scope}
\node (z1) at (1,-1) {};
\node (z2) at (3,-1) {};
\node (v2) at (5,3) {};
\node (u2label) at (u2) {$u_2$};

\node [scale = 1.6] (label) at (2.1,4.35) {$V_{10}$};
\path [->, shorten <= -0.15em, shorten >=0.1em] (label) edge[bend right=30] (v);

\node (comment1) at (5.5,4.1) {$X=1$};

\node (comment2) at ($(u2)+(0,-1.2)$) {$Y=0$};

\begin{scope}[every edge/.style={draw=black, thick}]
    \path [-] (u1) edge (v);
    
    \path [-] (w) edge (v);
    
    \path [dotted] (u2) edge node [sloped] {\Large $\times$} (v);

    \path [-] (u1) edge (z1);
    \path [-] (u1) edge (z2);
    \path [-] (w) edge (v2);
        
    \draw [dashed] plot [smooth] coordinates {(0,-1)  (2,3)  (3,3.75)  (4,3) (5.5,-1)};
     \draw [dashed] plot [smooth] coordinates {(3.1,5.4)  (3.45,4.45)  (4.15,3.7)  (5.5,1) (6,0.6)};
\end{scope}

\end{tikzpicture}}}%
\hspace{1em}%
\subfloat[$X=1, Y=1$\label{fig:1d}]{%
    \scalebox{\scalefactor}{\begin{tikzpicture} [square/.style={regular polygon,regular polygon sides=4}]
\begin{scope}[every node/.style={circle,thick,draw,minimum size = 2 em}]
    \node (u1) at (2,1) {$u_1$};
    \node [fill = gray, fill opacity = 0.35, text opacity=1] (v) at (3,3) {$v$};
\end{scope}
\begin{scope}[every node/.style={square,thick,draw,minimum size = 2 em}]
    \node [fill = blue, fill opacity = 0.45, text opacity=1] (w) at (4,5) {$w$};
    \node [fill = blue, fill opacity = 0.45, text opacity=1] (u2) at (4,1) {\hphantom{$w$}};
\end{scope}
\node (z1) at (1,-1) {};
\node (z2) at (3,-1) {};
\node (v2) at (5,3) {};
\node (u2label) at (u2) {$u_2$};

\node [scale = 1.6] (label) at (2.1,4.35) {$V_{11}$};
\path [->, shorten <= -0.15em, shorten >=0.1em] (label) edge[bend right=30] (v);

\node (comment1) at (5.5,4.1) {$X=1$};

\node (comment2) at ($(u2)+(0,-1.2)$) {$Y=1$};

\begin{scope}[every edge/.style={draw=black, thick}]
    \path [-] (u1) edge (v);
    
    \path [-] (w) edge (v);
    
    \path [-] (u2) edge (v);

    \path [-] (u1) edge (z1);
    \path [-] (u1) edge (z2);
    \path [-] (w) edge (v2);
        
    \draw [dashed] plot [smooth] coordinates {(0,-1)  (2,3)  (3,3.75)  (4,3) (5.5,-1)};
     \draw [dashed] plot [smooth] coordinates {(3.1,5.4)  (3.45,4.45)  (4.15,3.7)  (5.5,1) (6,0.6)};
\end{scope}

\end{tikzpicture}}}%
}
\caption{Each case is $XY$ value for node $v$ under a different interdiction example on the same graph; customers are circles; facilities are squares; facilities reachable from $v$ are shaded in blue.}
\label{fig:1}
\end{figure*}

We now give the DP equations that define \myAlgorithm.

\subsection{Base Case: Leaves}

We start by defining values $V_{XY}(v,b)$ for when $v$ is a leaf of the tree with parent $w$.
\begin{itemize}
    \item For a facility leaf node $v \in S$, regardless of whether a facility can be reached through~$w$ (i.e., for both $X=0$ and $X=1$), we set $V_{X1}(v,b) = 0$ as $v$'s subtree has no customers to disconnect and $V_{X0}(v,b)=-\infty$ since facility $v$ is always reachable in its subtree.
    Specifically, for each $b \leq r$,
    \begin{align*}
    V_{XY}(v,b)=\begin{cases}
        -\infty & \text{if $Y=0$},\\
        0       & \text{if $Y=1$}.
    \end{cases}
    \end{align*}

    \item For a customer leaf node $v \in V\setminus S$, we set $V_{00}(v,b) = w(v)$ and $V_{10}(v,b) = 0$ because $v$ is connected to a facility if and only if a facility is reachable through $w$ (i.e., when $X=1$). On the other hand, a facility is never reachable in the subtree rooted at $v$ regardless of $X$, so $V_{X1}(v,b)=-\infty$. Specifically, for each $b \leq r$,
    \begin{align*}
    V_{XY}(v,b)=\begin{cases}
        w(v) & \text{if $X=0$ and $Y=0$},\\
        0       & \text{if $X=1$ and $Y=0$},\\
        -\infty & \text{if $Y=1$}.
    \end{cases}
    \end{align*}
\end{itemize}

\subsection{Recursive Case: Inner Nodes}
With DP values calculated for each leaf, we now consider an inner node $v$ with children $u_1,\dots,u_k$ and calculate $V_{XY}(v,b)$ for all budgets $b\leq r$ and cases $XY\in \{00,01,10,11\}$. In particular, $V_{XY}(v,b)$ is evaluated by solving a knapsack problem on the table of DP values of its children, $V_{XY}(u_i,b_i)$ for $i=1,\dots,k$. Here, budget $b$ is partitioned into budgets $b_i$ for the subtree of each $u_i$, such that $\sum_{i=1}^{k}b_i=b$.

\subparagraph*{Inner Nodes: Case $V_{00}$}

If $v$ itself is a facility, it is not possible to prevent $v$ from reaching a facility in the subtree rooted at $v$ as required by guarantee $Y=0$, so we set $V_{00}(v,b) = -\infty$ for all $b\leq r$.

If $v$ is a customer, then we claim for all $b\leq r$,
\begin{align}\label{eq:V00}
V_{00}(v,b) = w(v) + \max_{b_i:\sum_{i=1}^k b_i = b} \Bigg\{ \;\sum_{i=1}^k\;\max\bigg\{
V_{00}(u_i,b_i),
V_{01}(u_i,b_i-1)
\bigg\} \Bigg\} \,.
\end{align}

\begin{proof}
Since $X=0$ and $Y=0$, $v$ cannot reach facilities through its parent nor children, so it is disconnected and its weight $w(v)$ is included in the solution value.
We consider all the possible partitions of the budget $b$ into budgets $b_1,\ldots,b_k$ for the subtrees rooted at the children $u_1,\dots,u_k$. To find the best budget partition for all $0 \leq b \leq r$, we solve a multiple choice knapsack problem (MCKP). 
This knapsack variant requires exactly one item to be selected from every ``bin,'' and it can be solved by DP methods described in \cite[Section 11.5]{Kellerer04}.
For each budget partition, there are two ways to spend budget~$b_i$ on the subtree of child~$u_i$ while enforcing that no facilities are reachable in the subtree rooted at $v$ for guarantee $Y=0$:
\begin{itemize}
    \item Keep the edge $(v,u_i)$ and require that $v$ cannot reach any facilities through $u_i$, thus adding solution value $V_{00}(u_i,b_i)$ from the subtree rooted at $u_i$.
    
    \item Spend one unit of budget to remove edge $(v,u_i)$ and disconnect $v$ from the subtree rooted at~$u_i$; in this case, facilities in the subtree rooted at $u_i$ are not reachable from $v$, so we can maximize over values $V_{00}(u_i,b_i-1)$ and $V_{01}(u_i,b_i-1)$. However, $V_{00}(u_i,b_i-1) \leq V_{00}(u_i,b_i)$, so the former case does not improve the solution value and can be omitted. Intuitively, removing  $(v,u_i)$ does not improve the objective when neither $v$ nor $u_i$ can reach facilities.\qedhere
\end{itemize}
\end{proof}

\subparagraph*{Inner Nodes: Case $V_{10}$}

Similarly to the previous case, if $v$ is a facility, then it trivially reaches a facility in the subtree rooted at $v$, so $V_{10}(v,b) = -\infty$ for all $b\leq r$.

If $v$ is a customer, then we claim for all $b\leq r$, 
\begin{align}\label{eq:V10}
V_{10}(v,b) = \max_{b_i:\sum_{i=1}^k b_i = b} \Bigg\{ \;\sum_{i=1}^k\;
\max\bigg\{
V_{10}&(u_i,b_i),
V_{00}(u_i,b_i-1),
V_{01}(u_i,b_i-1)
\bigg\} \Bigg\}\,.
\end{align}

\begin{proof}
By assumption $X=1$, we have that $v$ can reach a facility through its parent and so its weight $w(v)$ is not included in the solution value.
Similarly to the previous case, we solve an instance of MCKP to obtain, for all budgets $b \leq r$, the optimal partitions of $b$ into budgets $b_1,\ldots,b_k$ for the subtrees rooted at the children $u_1,\dots,u_k$.
For each budget partition, we have three ways to spend the budget $b_i$ on the subtree of child $u_i$ while enforcing $Y=0$:
\begin{itemize}
    \item Keep the edge $(v,u_i)$ and require that $v$ cannot reach any facilities through $u_i$, thus adding solution value $V_{10}(u_i,b_i)$ from the subtree rooted in $u_i$.
    \item Spend one unit of budget to remove edge $(v,u_i)$ and disconnect $v$ from the subtree rooted at $u_i$; in this case, facilities in the subtree rooted at $u_i$ are not reachable from $v$, so we maximize over values $V_{00}(u_i,b_i-1)$ and $V_{01}(u_i,b_i-1)$.\qedhere
\end{itemize}
\end{proof}

\subparagraph*{Inner Nodes: Case $V_{X1}$}

With $Y=1$ guaranteeing that $v$ can reach a facility in its subtree, we claim that $V_{01}(v, b)=V_{11}(v, b)$, leading to a single set of recursive equations for $V_{X1}$.
\begin{lemma} \label{lem:VX1}
$V_{01}(v, b)=V_{11}(v, b)$ for all $v\in V$ and $b\leq r$.
\end{lemma}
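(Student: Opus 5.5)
The idea is that once $Y=1$, the assumption $X$ has no effect on which customers in the subtree of $v$ are disconnected. I would prove the lemma directly from the semantics of $V_{XY}(v,b)$: the best total weight of disconnected customers in the subtree $T_v$ over interdiction strategies that spend $b$ edges inside $T_v$ and are consistent with conditions $XY$. I would deliberately not rely on the recursive equations, since those for $V_{X1}$ have not been stated yet.

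Fix an interdiction strategy $R$ restricted to the edges of $T_v$ under which $v$ reaches some facility $s$ inside $T_v$. This is exactly guarantee $Y=1$, and whether it holds depends only on $R\cap E(T_v)$, not on $X$.

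Now take a customer $c$ in $T_v$. I claim $c$ is disconnected in $T\setminus R$ under $X=0$ if and only if it is disconnected under $X=1$.

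In one direction, if $c$ can reach a facility using only edges inside $T_v$, then that path is unaffected by $X$, so $c$ is connected in both cases.

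In the other direction, suppose that under $X=1$ the customer $c$ reaches a facility only by leaving $T_v$ through the parent edge $(v,w)$. Such a path must pass through $v$, so $c$ is connected to $v$ within $T_v\setminus R$. Since $v$ reaches $s$ inside $T_v$, the customer $c$ reaches $s$ via $v$ without leaving $T_v$. Hence $c$ is covered regardless of $X$.

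It follows that the set of disconnected customers of $T_v$ is the same under $X=0$ and $X=1$ whenever $Y=1$. The value of every feasible strategy is therefore identical for both assumptions.

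The feasible sets for the two assumptions are also the same. Feasibility for $V_{X1}(v,b)$ only constrains the edges inside $T_v$ (budget $b$, with $v$ reaching a facility in $T_v$), while $X$ is an assumption about the outside. Taking the maximum over the same set of strategies with the same values gives $V_{01}(v,b)=V_{11}(v,b)$, including the case where both equal $-\infty$.

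The main obstacle is making precise that $V_{XY}$ depends only on edge removals inside $T_v$, that is, that the parent edge and the outside are encoded entirely by $X$. I would state this explicitly as the definition being used. As a consistency check, I would also verify the leaf base cases: $V_{X1}=0$ for facility leaves and $-\infty$ for customer leaves, which agree for both values of $X$. If an inductive proof were preferred instead, these leaf cases would serve as the base, with the connectivity argument above as the inductive step.
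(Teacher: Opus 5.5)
Your proposal is correct and follows essentially the paper's argument: once $Y=1$ connects $v$ to a facility inside its subtree, the facility assumed reachable through the parent gives no additional coverage to any node of $T_v$. The only difference is that the paper treats $v$ a facility and $v$ a customer as separate cases, whereas you handle both at once by noting that any route out of $T_v$ passes through $v$ (taking $s=v$ when $v$ is a facility).
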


\begin{proof}
Fix $v,b$. If $v$ is a facility, then child $u_i$ can reach a facility through $v$ if and only if edge $(v,u_i)$ is not removed. Thus, assumption $X$ on $v$'s ability to reach a facility through its parent does not affect whether any $u_i$ or, in fact, any other node in $v$'s subtree can reach a facility. Since $v$ has no weight to contribute, the value of $V_{X1}(v,b)$ is defined solely by the DP values of its children, and hence $V_{01}(v, b)=V_{11}(v, b)$.

If $v$ is a customer, then $Y=1$ guarantees that $v$ can reach a facility through one of its children, regardless of assumption $X$. Thus, each other child $u_i$ can reach a facility through $v$ (namely, in a sibling subtree) if and only if edge $(v,u_i)$ is not removed. Once more, $X$ does not affect whether other nodes in $v$'s subtree can reach a facility, giving $V_{01}(v, b)=V_{11}(v, b)$.
\end{proof}
We now give the recursive equations for $V_{X1}(v,b)$ when $v$ is a facility. Since $Y=1$ is trivially satisfied, for each child $u_i$, we just need to decide its budget~$b_i$ and whether to spend one unit of budget to disconnect $u_i$ from $v$. Thus, for all $b\leq r$,
\begin{align*}
V_{X1}(v,b) = \max_{b_i:\sum_{i=1}^k b_i = b} \Bigg\{ \;\sum_{i=1}^k\;
&\max\bigg\{
V_{10}(u_i,b_i),
V_{11}(u_i,b_i),
V_{00}(u_i,b_i-1),
V_{01}(u_i,b_i-1)
\bigg\} \Bigg\}\,.
\end{align*}
Since $V_{01}(u_i,b_i-1)=V_{11}(u_i,b_i-1)\leq V_{11}(u_i,b_i)$, the last term can be ignored, obtaining
\begin{align}\label{eq:V11_facility}
V_{X1}(v,b) = \max_{b_i:\sum_{i=1}^k b_i = b} \Bigg\{ \;\sum_{i=1}^k\;
\max\bigg\{
V_{10}&(u_i,b_i),
V_{11}(u_i,b_i),
V_{00}(u_i,b_i-1)
\bigg\} \Bigg\}\, .
\end{align}
Before giving the recursive equations for $V_{X1}(v,b)$ when $v$ is a customer, we note that we \emph{require} $v$ to be connected to a facility in its subtree, causing us to introduce a new variable $z_i$ indicating whether $v$ is connected to a facility through $u_i$. Thus, in addition to maximizing over budget partitions subject to $\sum_{i=1}^{k} b_i=b$, we also maximize over $z_1,\dots,z_k\subseteq \{0,1\}^k$ subject to the constraint that $\sum_{i=1}^{k} z_i\geq 1$. Now, when $v$ is a customer, we claim that for all $b\leq r$,
\begin{align}\label{eq:V11_customer}
V_{X1}&(v,b) = \max_{b_i,z_i:\sum_{i=1}^k b_i = b, \sum_{i=1}^k z_i \geq 1}
\Bigg\{ \;\sum_{i=1}^k\; z_i V_{X1}(u_i,b_i) \quad+ \notag\\
& (1-z_i) \max\left\{ V_{10}(u_i,b_i),V_{00}(u_i,b_i-1),V_{01}(u_i,b_i-1) \right\} \Bigg\}\,.
\end{align}
\begin{proof}
By assumption $Y=1$, we have that $v$ can reach a facility in its subtree, so weight $w(v)$ is not included in the solution value. We then solve a multiple choice knapsack problem across partitions of budget $b$ into budgets $b_1,\ldots,b_k$ for all $b\leq r$ and indicators $z_1,\dots,z_k$ such that $\sum_{i=1}^{k} z_i\geq 1$. For the subtree of child $u_i$, there are various ways to spend the budget $b_i$ depending on the value of $z_i$:
    \begin{itemize}
        \item When $z_i = 1$, we keep the edge $(v,u_i)$ and require that $v$ can reach a facility through $u_i$, adding solution value $V_{X1}(u_i,b_i)$ (as $V_{11}(u_i,b_i)=V_{01}(u_i,b_i)$ due to \cref{lem:VX1}).
        \item When $z_i=0$, we can keep the edge $(v,u_i)$ in the case that $u_i$ cannot reach a facility in its subtree, adding solution value $V_{10}(u_i,b_i)$. Otherwise, we can spend one unit of budget to remove the edge and add solution value $V_{01}(u_i,b_i-1)$ or $V_{00}(u_i,b_i-1)$.
    \end{itemize}
The additional constraint $\sum_{i=1}^{k} z_i \geq 1$ leads to a novel variant of the multiple choice knapsack problem, which we call the \emph{constrained multiple choice knapsack problem} (CMCKP). This can be solved by similar DP methods to MCKP, which we describe in \cref{alg:MCKP} in \cref{sec:MCKP}.
\end{proof}

\subsection{Time Complexity and Solution Reconstruction}\label{sec:reconstruction}

At every inner node $v$, our algorithm takes time $O(kr^2)$ to solve a (constrained) multiple-choice knapsack problem, where $k$ is the number of children of~$v$. To solve MCKP instances, we use a simple DP algorithm given in \cite[Section 11.5]{Kellerer04}, which we modify in \cref{sec:MCKP} to give \cref{alg:MCKP} for the constrained variant. The time complexity of our algorithm can then be obtained through a charging argument. At every inner node $v$, we will charge cost~$O(r^2)$ to each of the $k$ edges connecting $v$ to its children, which total to $O(kr^2)$ as desired.
Across all inner nodes, each edge is charged exactly once with cost $O(r^2)$; thus, the cost across inner nodes is $O(|E|r^2)=O(nr^2)$, dominating the $O(nr)$-time to initialize $V_{XY}(v,b)$ for all leaves $v$ and budgets $b\leq r$.
Hence, \myAlgorithm runs in $O(n r^2)$ time.


To explicitly reconstruct the set of removed edges $R^*$ that yield an optimal solution, our algorithm maintains a predecessor array $P_{XY}(v,b)$ of pointers from every \emph{state} of inner node $v$ (i.e., combination of conditions $XY$ and budget $b$) to a state for each child $u_i$ (i.e., combination of conditions $X_iY_i$ and budget $b_i$). These predecessors are set by taking an argmax at every max in our DP equations and storing the states in $P$. We also store indicator array $I$, which tracks whether edge $(v,u_i)$ was removed for each state of~$v$ and child $u_i$, equivalent to a DP equation allotting budget $b_i-1$ to the subtree of $u_i$.

After calculating the optimal value $V_{X^*Y^*}(v^*,r)$ at root $v^*$ from $XY\in\{00,01,10,11\}$, we traverse the tree downwards by following our predecessor array $P$. At inner node~$v$ in state~$(XY,b)$, $P_{XY}(v,b)$ gives pointers to states for each child $u_i$ and indicator array $I_{XY}(v,b)$ tells us whether $(v,u_i)\in R^*$. Solution reconstruction takes time $O(n)$ for a post-order traversal of the tree, which is dominated by the runtime of solving the DP. As for memory, both solving the DP and solution reconstruction require only $O(nr)$ space.

\section{Numerical Results for \myAlgorithm}
\label{sec:experimental}
We evaluate \myAlgorithm on a data set of randomly generated problem instances to highlight the characteristics of both our fixed-parameter linear solution and the $r$-edge interdiction covering problem. All numerical results were obtained on an Apple M1 CPU.

To determine appropriate baselines for our numerical evaluation, we observe (similarly to Fröhlich and Ruzika~\cite{FrohlichR22}) that REIC on trees can be formulated as the following integer program, where disconnecting a customer $v \in V\setminus S$ (i.e., setting $x_v=0$) requires that at least one edge $e$ is removed (setting $y_e=1$) on every path from $v$ to a facility $s \in S$. We let $P_{uv}$ denote the unique $u$-$v$ path in $T$.

\begin{equation}\label{eq:ip}\tag{REIC-IP}
\begin{array}{rr@{}llll}
\text{max}&\displaystyle\sum_{v\in V\setminus S} w(v)\,(1-x_v) \\
\text{subject to}& x_v + \displaystyle\sum_{e\in P_{sv}} y_e \ &\geq 1 \quad&\forall v\in V\setminus S, s\in S\\
    &\displaystyle\sum_{e\in E} y_e \ &\leq r\\
    &  x_v \ &\in \{0,1\} \quad&\forall v\in V\setminus S \\
    & y_e \ &\in \{0,1\}\quad&\forall e\in E
\end{array}
\end{equation}

\begin{figure*}[htb]
  \centering
  \setlength{\tabcolsep}{6pt}
  \begin{tabular}{cc}
    \subfloat[Runtime as a function of the number of nodes $n$
      (for $p = 0.4$ and $r = n/10$)\label{subfig:nodes_CI}]{%
      \includegraphics[width=0.405\textwidth]{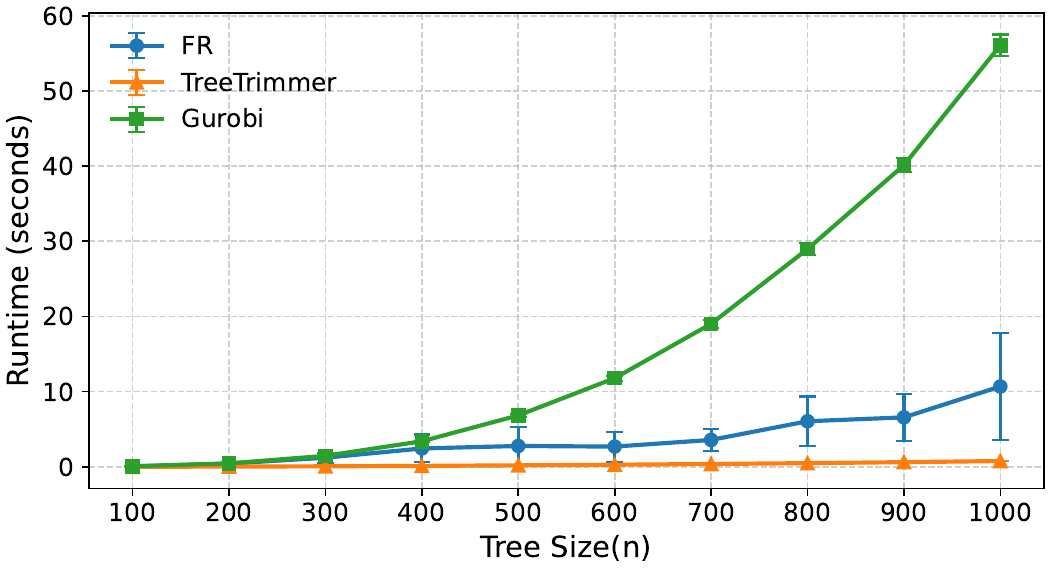}} &
    \subfloat[Speedup as a function of the number of nodes $n$
      (for $p = 0.4$ and $r = n/10$)\label{subfig:nodes_speed}]{%
      \includegraphics[width=0.40\textwidth]{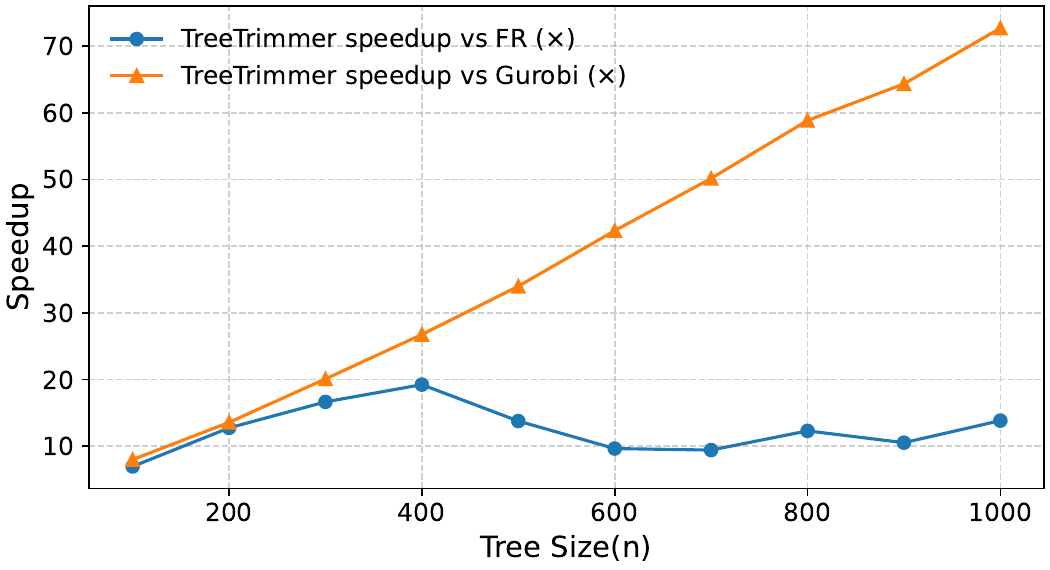}} \\

    \subfloat[Runtime as a function of the facility probability $p$
      (for $n = 200$ and $r = n/10$)\label{subfig:facility_CI}]{%
      \includegraphics[width=0.42\textwidth]{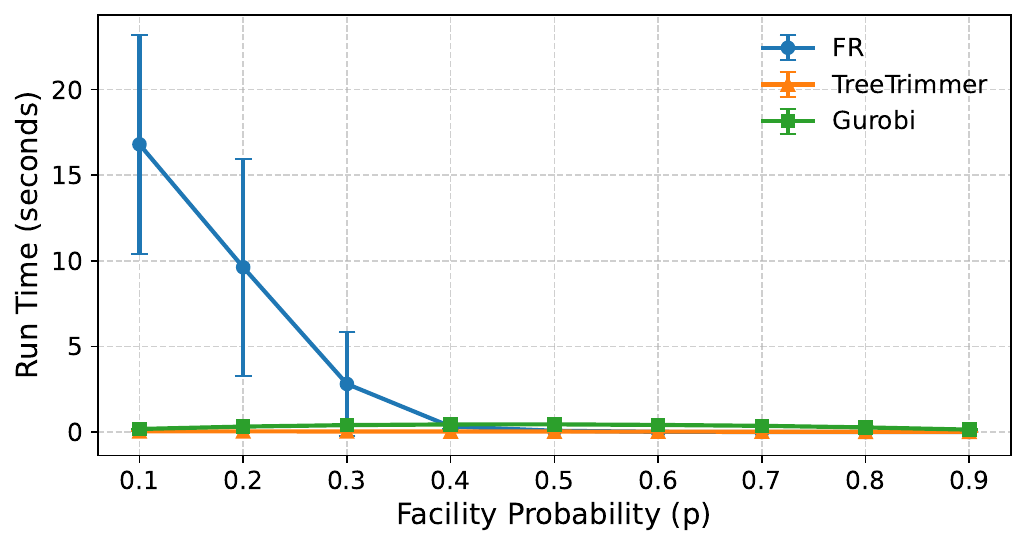}} &
    \subfloat[Runtime as a function of the budget $r$
      (for $n = 600$ and $p = 0.4$)\label{subfig:budget_CI}]{%
      \includegraphics[width=0.405\textwidth]{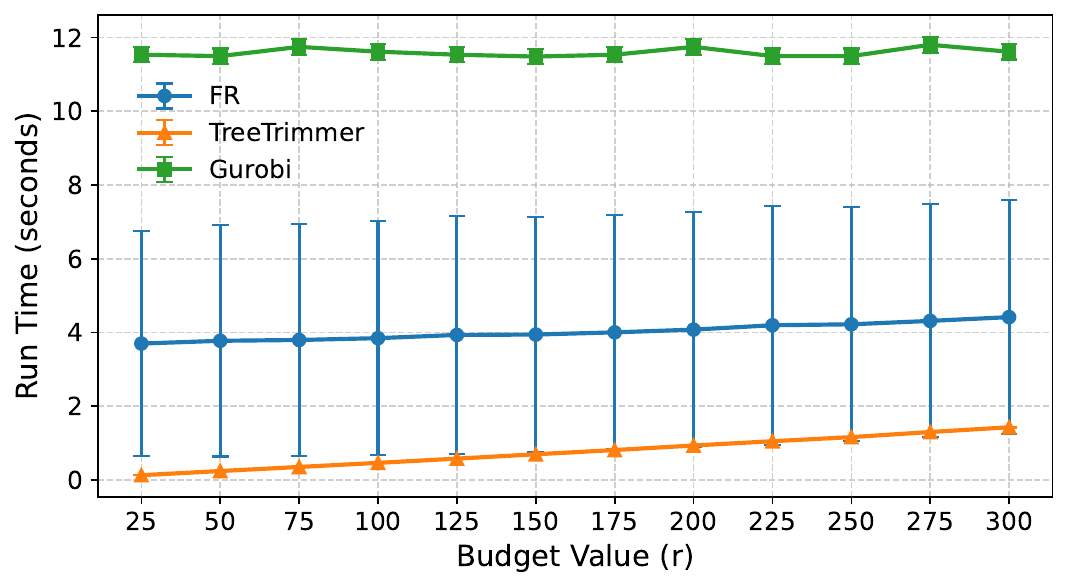}} \\
  \end{tabular}
  \caption{Comparison of average mean runtime under varying
           experimental settings.}
  \label{fig:overall_results}
\end{figure*}
\subsection{Baselines and Dataset}
We use two baselines for solving REIC on trees: (1) Gurobi Optimizer~\cite{Gurobi} (using the formulation of \ref{eq:ip}), and (2) Fröhlich and Ruzika method \cite{FrohlichR22}.
For our dataset, we generate random tree instances using the \texttt{nx.random\_tree(n, seed)} function from the NetworkX library, which produces labeled trees with exactly $n$ nodes \cite{hagberg2008exploring}, uniformly at random; this function generates a random Prüfer sequence of length $n - 2$ and converts it to a tree, ensuring that all possible labeled trees are equally likely. Setting a fixed seed guarantees reproducibility across experimental runs. We generate two types of datasets: (i) where any node in the network can serve as a facility, and (ii)
where only leaf nodes can serve as facilities. This distinction allows direct comparison with the solution of subproblems in \cite{FrohlichR22} where facilities are located on leaf nodes.

\subsection{Experiments}
To illustrate \myAlgorithm's performance, we vary the parameters $n$ and $r$ as well as $p$, the probability of a node being a facility.
To obtain statistically meaningful results, we run each problem instance on a set of random trees and report averages and corresponding confidence intervals. Our experiments focus on: (i) comparing performance of \myAlgorithm  with above baselines (in \cref{subsubsection:comparison}), and (ii) analyzing sensitivity of runtime performance to changes in above parameters (in \cref{subsubsection:sensitivity}). 
We  illustrate that the performance gap between \myAlgorithm  and FR as well as Gurobi grows larger as we increase $n$ or decrease $p$. This gap is also larger with lower $r$ values. 
We show that FR's behavior can be non-monotonic as a function of these parameters, and it tends to exhibit higher runtime variability.
Thus, we also (iii) conduct experiments to provide insight into sources of FR's high variance (in \cref{subsubsection:source}),
showing that it is largely a function of differences in cluster sizes, shape, and the spatial distribution of facilities; when the number of facilities is small, FR forms fewer but larger clusters, increasing both runtime mean and variance.
We present representative results but conducted experiments across a broad range of parameter combinations, with qualitatively similar results.

\subparagraph*{Runtime Performance}
\label{subsubsection:comparison}
We now compare the mean runtime of \myAlgorithm, FR, and Gurobi.

In \cref{subfig:nodes_CI}, we depict the mean runtime as a function of $n \in \{100, 200, \ldots, 1000\}$ for $p = 0.4$ and $r = n/10$ and plot the corresponding 95\% confidence intervals. \cref{subfig:nodes_speed} depicts  corresponding speedup achieved by \myAlgorithm, relative to baseline methods; \myAlgorithm achieves a lower mean runtime than Gurobi and FR, with speedup of $6.9$--$19.2\times$ over FR and $8$--$72.6\times$ over Gurobi. In Gurobi's case, larger $n$ results in greater speedup, since (as depicted in \cref{subfig:nodes_CI}) Gurobi's runtime grows exponentially with $n$. In contrast, the speedup over FR is not monotonic as a function of $n$; we believe this to be due to FR's higher runtime variability, as can be seen by the relatively large confidence interval range in \cref{subfig:nodes_CI} (we explore this in greater detail in \cref{subsubsection:sensitivity,subsubsection:source}).

\begin{figure*}[tbh]
  \centering
  \setlength{\tabcolsep}{6pt}
  \begin{tabular}{ccc}
    \subfloat[CV as a function of varying number of nodes.\label{subfig:nodes_CV}]{%
      \includegraphics[width=0.30\textwidth]{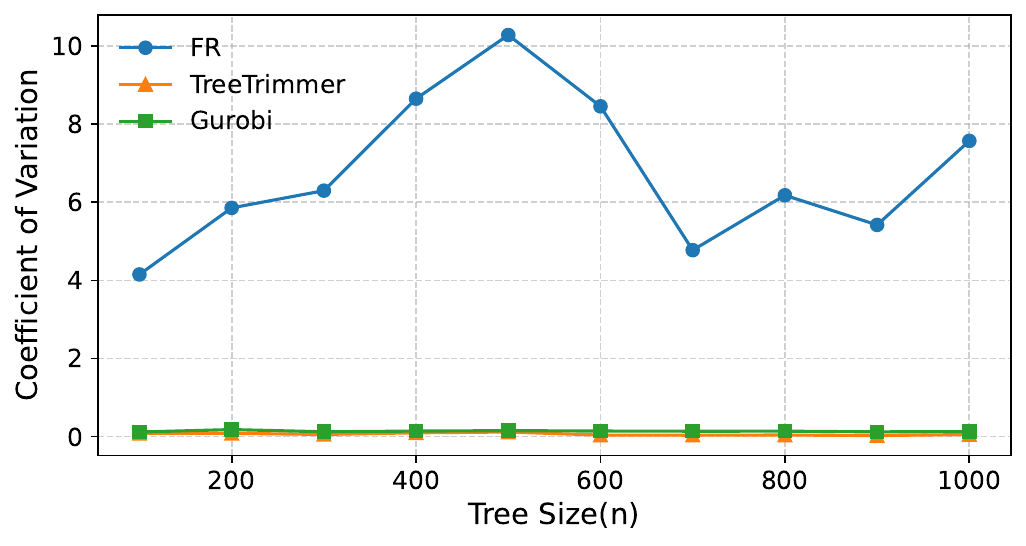}} &
    \subfloat[CV as a function of varying facility probability.\label{subfig:facility_CV}]{%
      \includegraphics[width=0.30\textwidth]{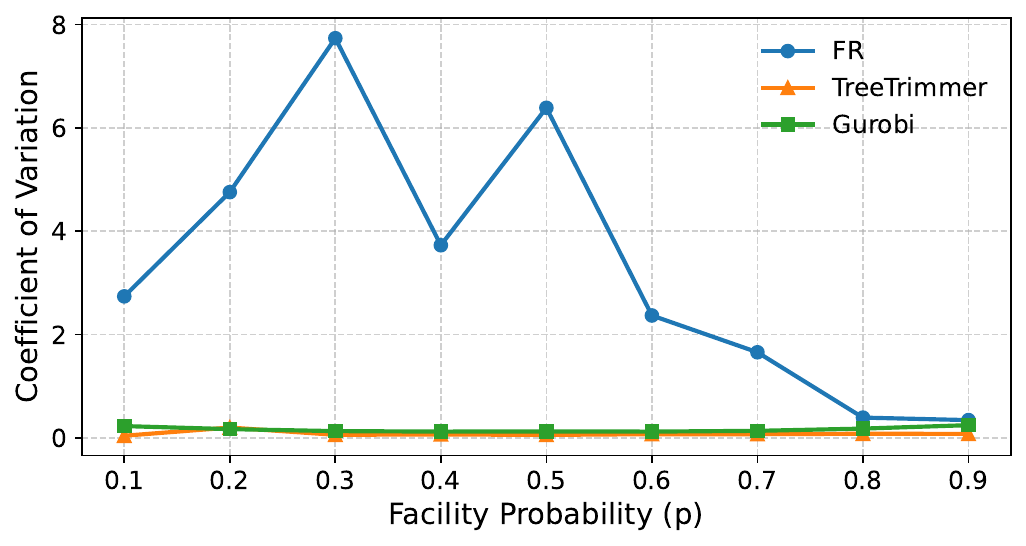}} &
    \subfloat[CV as a function of varying budget value.\label{subfig:budget_CV}]{%
      \includegraphics[width=0.30\textwidth]{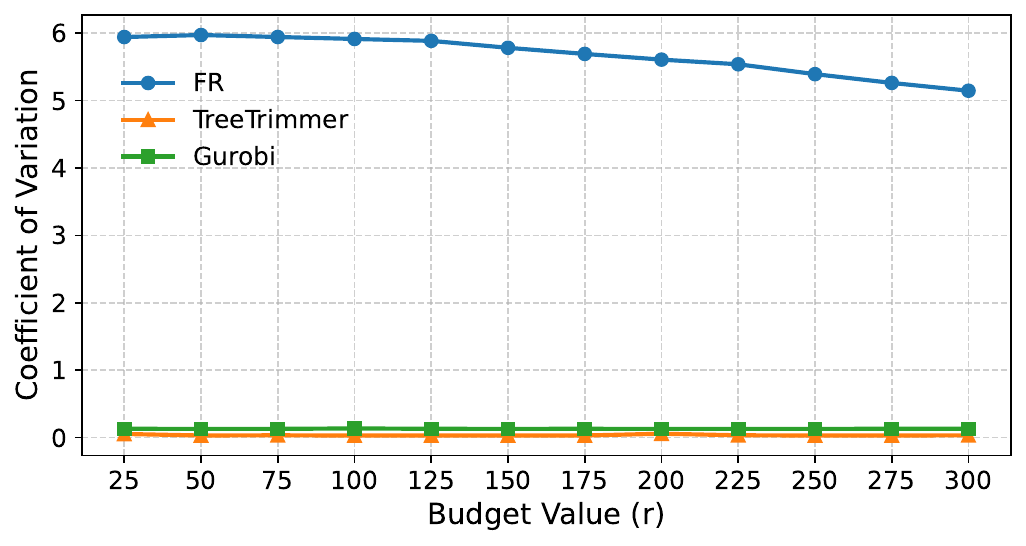}} \\
  \end{tabular}
  \caption{Comparison of sensitivity to different variables:
  (a) coefficient of variation as a function of total number of nodes (tree size),
  (b) coefficient of variation as a function of facility probability,
  (c) coefficient of variation as a function of budget values.}
  \label{fig:overall_results_CV}
\end{figure*}

\subparagraph*{Runtime vs.\ $p$}

\cref{subfig:facility_CI} depicts mean runtime (with corresponding 95\% confidence interval) as a function of $p \in \{0.1, 0.2, \ldots, 0.9\}$ for $n = 200$ and $r = n/10$. Decreasing $p$ affects the three algorithms in different ways. Lowering $p$ substantially increases FR's mean runtime. This occurs because FR forms subproblems as clusters of customer nodes where facilities are located on leaves of the original tree: When $p$ decreases, it is more likely that these clusters grow, in turn increasing runtime mean and variance; we explore the underlying reason for this behavior in \cref{subsubsection:source}. For Gurobi, the most computationally challenging instances occur when $p = 0.5$, as this setting produces the largest number of ILP constraints, with runtime approximately $3\times$ the fastest setting (i.e., with very low or very high values of $p$). For \myAlgorithm, the most demanding cases occur when $p$ is low; however, it is still significantly faster, e.g., approximately $420\times$ faster than FR and $4.5\times$ faster than Gurobi. In this setting, the smaller number of facilities slightly increases the amount of computation required for customer nodes in \myAlgorithm. However, this effect remains relatively minor compared to the strong sensitivity observed in FR and Gurobi. 

\subparagraph*{Runtime vs.\ $r$}

\cref{subfig:budget_CI} depicts the mean runtime (and corresponding 95\% confidence interval) as a function of $r \in \{25, 50, \ldots, 300\}$ for $p = 0.4$ and $n = 600$. \myAlgorithm  is most sensitive to $r$, although its runtime remains lower. Although \myAlgorithm's theoretical complexity is $O(nr^2)$, the observed runtime trend is closer to linear as a function of $r$; we believe that this is due to the constant factor associated with enumerating possible items, $O(nr)$, being larger than that of the MCKP constraint computation, $O(nr^2)$, in our Python implementation. FR exhibits a sublinear trend as a function of $r$, we believe, due to its clustering approach where for each cluster, the runtime tends to saturate once $r$ becomes comparable to the number of facilities in that cluster. Gurobi’s runtime is not significantly affected by increases in $r$, since the number of constraints in its formulation remains constant. FR continues to exhibit relatively large confidence intervals range in \cref{subfig:budget_CI} (explored further in \cref{subsubsection:sensitivity} and \cref{subsubsection:source}).

\begin{table}[h]
\centering
\caption{Mean runtime (seconds) with 95\% confidence intervals (mean $\pm$ CI) across a range of $p$ values, for $n = 600$ and $r = n/10$.}
\label{tab:runtime_ci}
\begin{tabular}{|c|c|c|c|}
\hline
$p$ & \myAlgorithm\ (s) & Gurobi (s) & FR (s) \\
\hline
0.3 & 0.310 $\pm$ 0.002 & 10.236 $\pm$ 0.192 & 28.842 $\pm$ 36.867 \\
0.4 & 0.284 $\pm$ 0.003 & 11.680 $\pm$ 0.210 & 3.561 $\pm$ 2.538   \\
0.5 & 0.254 $\pm$ 0.005 & 11.859 $\pm$ 0.220 & 0.622 $\pm$ 0.804   \\
0.6 & 0.224 $\pm$ 0.002 & 11.152 $\pm$ 0.214 & 0.117 $\pm$ 0.062   \\
0.7 & 0.194 $\pm$ 0.002 & 9.397 $\pm$ 0.189  & 0.054 $\pm$ 0.015   \\
0.8 & 0.166 $\pm$ 0.002 & 6.887 $\pm$ 0.149  & 0.031 $\pm$ 0.001   \\
0.9 & 0.134 $\pm$ 0.001 & 3.640 $\pm$ 0.096  & 0.017 $\pm$ 0.000   \\
\hline
\end{tabular}
\end{table}

Lastly, we repeat the experiment from \cref{subfig:facility_CI} but with $n=600$ to better understand the runtime gap among the three algorithms as the number of nodes grows. As shown in \cref{tab:runtime_ci}, FR's runtime for smaller values of $p$ grows substantially when $n$ increases from $200$ (in \cref{subfig:facility_CI}) to $600$ (in \cref{tab:runtime_ci}), whereas the runtime of \myAlgorithm\ remains quite small. Gurobi's runtime is not small but with low variance. FR's runtime is large and highly variable (relative to the mean) for more reasonable values of $p=0.3, 0.4$. Note that, FR's mean runtime at $p=0.3$ is around $8\times$ higher than at $p=0.4$. And, if we compare the runtime at $p=0.3$ with that in \cref{subfig:facility_CI}, we observe an order of magnitude growth (in both mean and confidence interval), from $2.810 \pm 3.012$ (for $n=200$) to $28.842 \pm 36.867$ (for $n=600$). FR's mean runtime and its variability are small when $p$ reaches high values ($0.6$ and above in \cref{tab:runtime_ci}), which is not a reasonable setting for most applications (i.e., to have more than half of the nodes be facilities).
We omit results for $p \in \{0.1, 0.2\}$ because FR's runtime grows so large over this parameter range that we were not able to complete our experiments for $n=600$ (these results are available in \cref{subfig:facility_CI} for $n=200$).
We also note that a typical application, like a LEO constellation, would have thousands of nodes (e.g., around 9,000 in Starlink).

\subsection{Sensitivity of Runtime Performance}
\label{subsubsection:sensitivity}
We now repeat above experiments but instead consider the coefficient of variation (CV) as our metric, again as a function of $n$, $r$, and $p$, in order to explore the variability in runtime behavior of \myAlgorithm, FR, and Gurobi. The results are depicted in \cref{fig:overall_results_CV}.

As $n$ increases, CV remains close to zero and nearly constant for both Gurobi and \myAlgorithm; in contrast, FR's CV is much higher (ranging between approximately 4 and 10) and non-monotonic, as shown in \cref{subfig:nodes_CV}. 
As $r$ increases, CV again remains close to zero and nearly constant for Gurobi and \myAlgorithm. FR's CV is substantially higher but does not vary as much as a function of $r$, as shown in \cref{subfig:budget_CV}.

As $p$ increases, CV remains close to zero and nearly constant for Gurobi and \myAlgorithm; in contrast, FR's CV is much higher (ranging between approximately 0.3 and 7.7) and non-monotonic, as shown in
\cref{subfig:facility_CV}.

Thus, we observe a low CV for Gurobi and \myAlgorithm. In contrast, FR can experience high CV; moreover, its behavior is non-monotonic with respect to $n$ and $p$. We explore underlying source of the high variation and non-monotonic behavior of FR next.

\begin{figure*}[t]
  \centering
  \setlength{\tabcolsep}{6pt}
  \begin{tabular}{cc}
    \subfloat[Runtime as a function of nodes for one cluster\label{subfig:FR_box}]{%
      \includegraphics[width=0.40\textwidth]{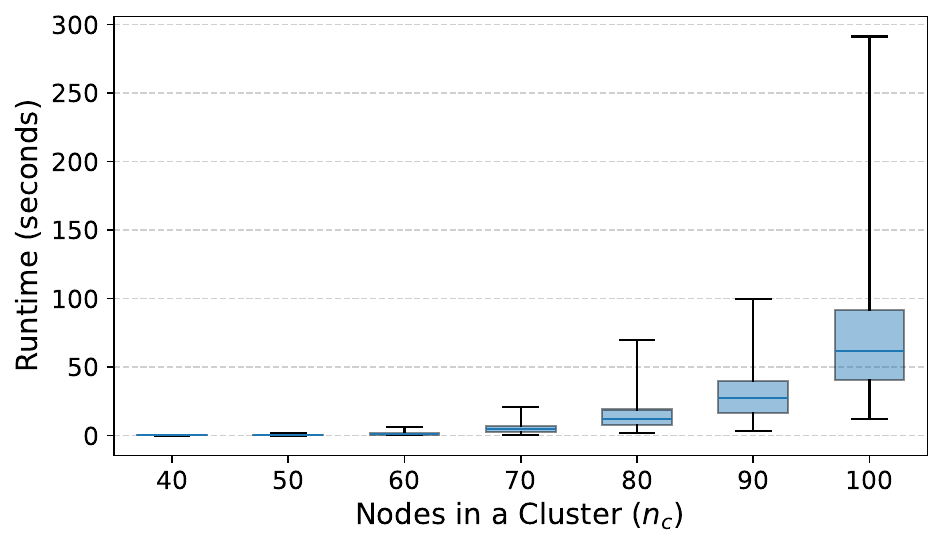}} &
    \subfloat[Mean runtime as a function of nodes for one cluster\label{subfig:FR_CI}]{%
      \includegraphics[width=0.43\textwidth]{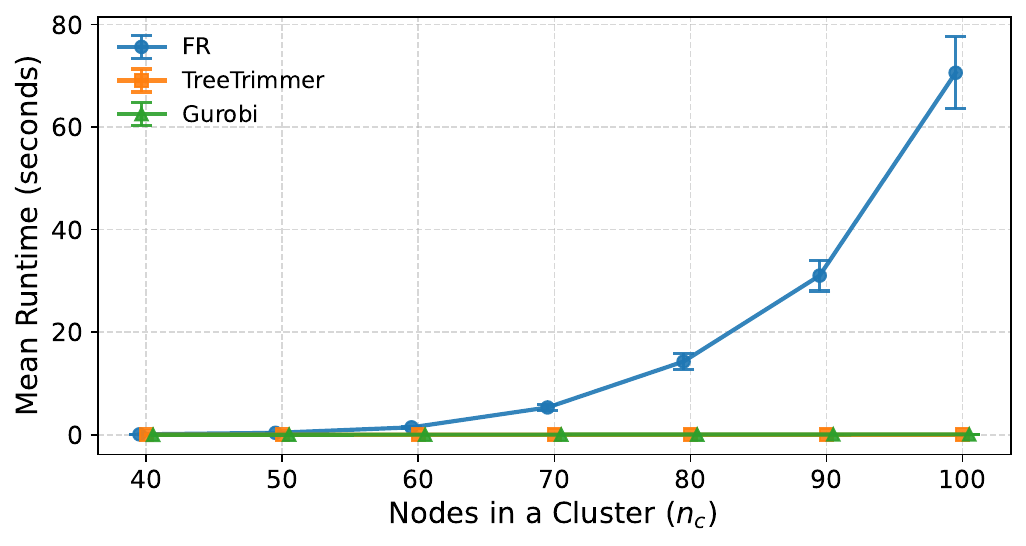}} \\
    \subfloat[Runtime as function of customer joints\label{subfig:FR_JC_box}]{%
      \includegraphics[width=0.43\textwidth]{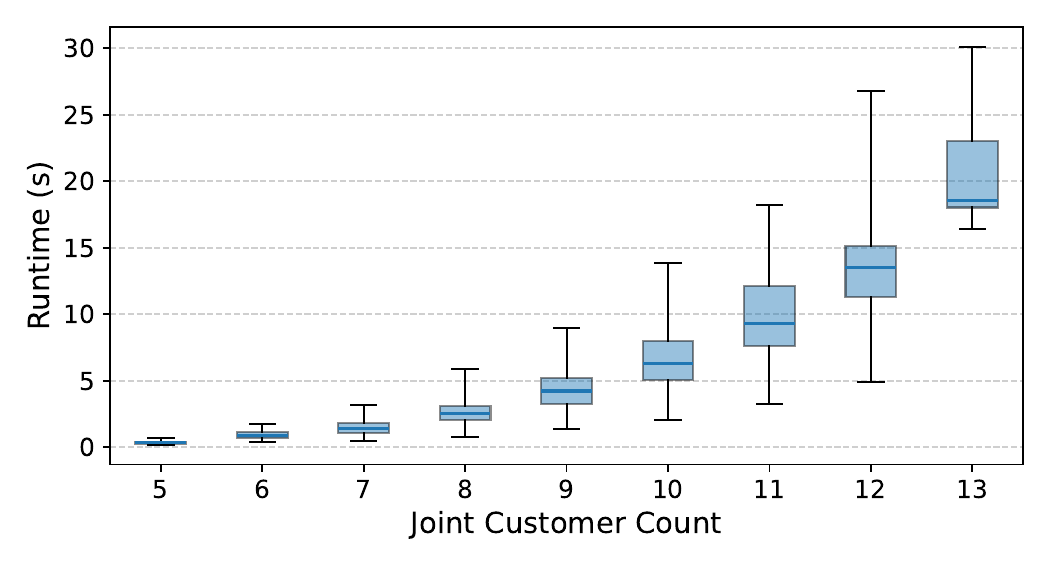}} &
    \subfloat[CV as a function of customer joints\label{subfig:FR_JC_CV}]{%
      \includegraphics[width=0.43\textwidth]{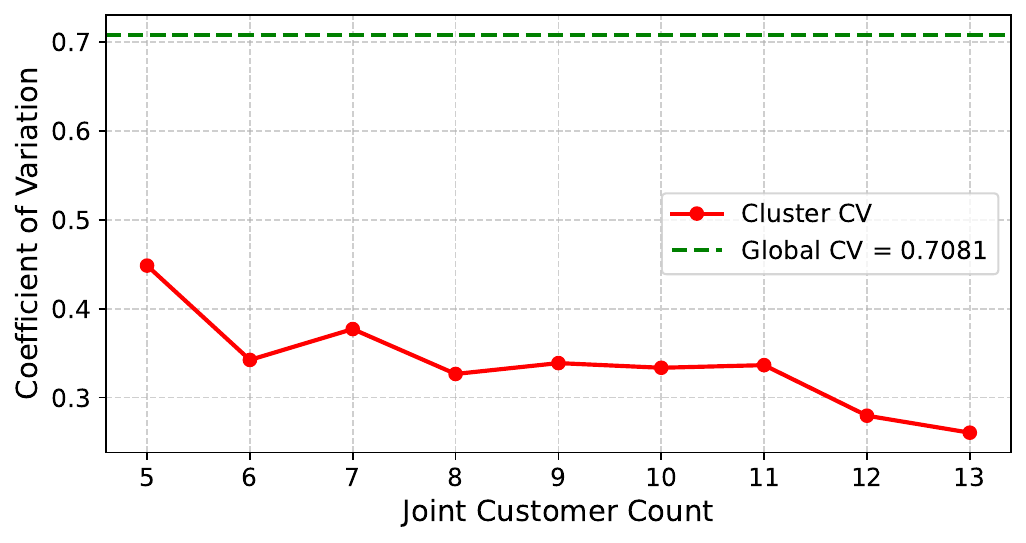}} \\
  \end{tabular}
  \caption{Illustration of the source of high variance in FR runtime:
  (a) runtime box plot for different numbers of nodes in the cluster, and
  (b) mean runtime as a function of the number of nodes in the cluster,
  (c) runtime box plot for varying number of customer joints in the cluster,
  showing that an increase in the number of customer joints increases the
  runtime of FR, and (d) CV as a function of customer joints.}
  \label{fig:overall_results_FR}
\end{figure*}

\subsection{Insight into FR's High Performance Variability}
\label{subsubsection:source}

We now conduct experiments to provide insight into sources of FR's high variance.

\subparagraph*{Cluster Size}

FR first identifies clusters of customer nodes with facilities located on leaves and then computes the results for each cluster across all budget values. Thus, to understand FR's behavior, we consider experiments with single cluster trees. That is, we compute the runtime of each algorithm over 200 randomly generated instances of trees with a single cluster, i.e., in these instances facilities are restricted to leaves. We do this for $n_c \in \{40, 50, \ldots, 100\}$ nodes in a cluster, with $r = n_{c}/10$ and $p = 0.5$. We first depict the mean runtime as a function of $n_c$ with corresponding 95\% confidence interval in \cref{subfig:FR_CI}. As expected, the mean runtime of the FR algorithm grows much faster than that of our algorithm and Gurobi since FR's worst-case complexity is $O(rn^7)$ (we can consider $n=n_c$ for a single cluster). We observe that even within a single cluster, the width of the confidence interval is significant for larger clusters. To explore the variance behavior, we depict the corresponding box plot of runtime, indicating the minimum, 25th percentile (Q1), median, 75th percentile (Q3), and maximum runtime in \cref{subfig:FR_box}. Note that for a cluster of a particular size, the variation in runtime can be quite high, which occurs because the structure of the cluster significantly affects the runtime, i.e., two clusters with the same number of nodes (customers and facilities) may still exhibit significantly different runtimes depending on their internal structure, and the larger the cluster, the greater is the chance of outliers in its structure. This explains the results in \cref{subfig:facility_CI}, where low $p$ produces relatively large clusters and high variability in runtime.

\subparagraph*{Customer Joint Nodes}

We define customer joints as customer nodes whose removal results in at least three connected components, each containing a facility. In this experiment, we compute the mean runtime across 2,000 randomly generated single cluster trees, i.e., with facilities restricted to leaves, with $p = 0.5$, $n_c = 70$, and $r = 7$. For each instance, we record the number of customer joints and the corresponding runtime. We then group instances based on their joint counts and generate box plots summarizing the minimum, first quartile (Q1), median, third quartile (Q3), and maximum runtime for each group, as depicted in \cref{subfig:FR_JC_box}. 
We also report the CV of runtime across groups, as shown in \cref{subfig:FR_JC_CV}, where the line labeled as ``global CV'' indicates the CV computed over all 2000 instances, depicted for comparison. We observe that the runtime increases with the number of customer joints. This occurs because edges incident to customer joint nodes disconnect groups of nodes from at least two facilities when removed. Such edges are looked at frequently in FR, increasing the computational cost. We also observe in \cref{subfig:FR_JC_CV} that the CV is low when viewed as a function of the number of customer joints (and lower than the ``global CV''); thus graphs with similar numbers of customer joints tend to exhibit similar runtimes.

\subparagraph*{Summary}
The runtime variation of FR is largely a function of differences in cluster sizes, shape, and the spatial distribution of facilities. When the number of facilities is small, FR forms fewer but larger clusters, increasing both mean runtime and its variance. This behavior is further influenced by the presence of customer joints: as their number increases, the median runtime grows.

\section{\texorpdfstring{$r$}{r}-Edge Interdiction Covering on Bounded Treewidth Graphs}
\label{sec:bddtw}

\subsection{Tree Decompositions and Treewidth}

We also study the $r$-edge interdiction covering problem on the class of bounded treewidth graphs. Informally, treewidth is a measure of how far a graph is from being a tree, with trees and forests having treewidth 1. We now give definitions of tree decompositions and treewidth, following the treatment of \cite[Chapter 7]{Cygan2015ParameterizedA}.
\begin{definition}
    A \emph{tree decomposition} $(T,\{X_t\}_{t\in V(T)})$ of graph $G$ is a tree $T$ in which each node $t$ is associated with a set of vertices $X_t\subseteq V(G)$, called a bag, such that the following conditions hold:
    \begin{itemize}
        \item For every vertex $v\in V(G)$, there is a node $t\in T$ such that $v\in X_t$.
        \item For every edge $(u,v)\in E(G)$, there is a node $t\in T$ such that $X_t$ contains $u$ and $v$.
        \item For every vertex $v\in V(G)$, the nodes whose corresponding bags contain $v$ form a connected subtree of $T$, i.e., $T_v=\{t\in V(T):v\in X_t\}$ induces a connected subtree on $T$.
    \end{itemize}
The \emph{width} of a tree decomposition $(T,\{X_t\}_{t\in V(T)})$ is defined to be $\max_{t\in T}|X_t|-1$, the size of the largest bag minus one. The \emph{treewidth} of a graph $G$ is the minimum width over all tree decompositions of $G$, denoted as $\tw(G)$. 
\end{definition}
Throughout this section, we will use the term vertices to denote the elements of $V(G)$ and nodes for the elements of $V(T)$ in order to avoid confusion.

Generally, it is NP-complete to decide whether a graph has treewidth at most $k$, for arbitrary $k$. Bodlaender \cite{bodlaender} gave an FPT algorithm to recognize graphs of treewidth at most $k$ and construct a width $k$ tree decomposition. This algorithm is linear time for constant $k$, i.e., for graphs of ``bounded treewidth,'' but with a runtime of $k^{O(k^3)}\cdot n$, it is highly impractical for any real setting. On the other hand, if one settles for approximate rather than exact treewidth, then there is a $4$-approximation with single-exponential runtime of $O(8^kk^2\cdot n^2)$ due to \cite{RobertsonSeymourTW} as well as a polynomial-time $O(\sqrt{\log k})$-approximation \cite{FHLTW}.

In order to describe a dynamic programming algorithm on tree decompositions, it will be helpful to work with a convenient kind of tree decomposition called a nice tree decomposition, and in particular, an \emph{extended nice tree decomposition}. Originating in the work of Kloks~\cite{Kloks1994TreewidthCA}, these decompositions help to define DP equations for our interdiction problem as they introduce vertices and edges one by one.
\begin{definition}
    An \emph{extended nice tree decomposition} is a tree decomposition with fixed root node $q\in V(T)$ such that $X_q=\emptyset$, where every node in $T$ is one of the following:
    \begin{itemize}
        \item \emph{Leaf node:} a leaf $t$ of $T$ with $X_t=\emptyset$.
        \item \emph{Introduce vertex node:} a node $t$ with exactly one child $t'$ such that $X_t=X_{t'}\cup \{v\}$ for some $v\in V(G)\setminus X_{t'}$.
        \item \emph{Introduce edge node:} a node $t$ labeled by an edge $(u,v)\in E(G)$ with exactly one child $t'$ such that $X_t=X_{t'}$ and $u,v\in X_t$.
        \item \emph{Forget node:} a node $t$ with exactly one child $t'$ such that $X_t=X_{t'}\setminus \{w\}$ for some $w\in X_{t'}$.
        \item \emph{Join node:} a node $t$ with exactly two children $t_1,t_2$ such that $X_t=X_{t_1}=X_{t_2}$.
    \end{itemize}
Finally, we require that each edge of $E(G)$ is introduced exactly once in the tree decomposition.
\end{definition}
It is known that given a tree decomposition $(T,\{X_t\}_{t\in V(T)})$ of $G$ of width at most $k$, one can compute a width $k$ extended nice tree decomposition of $G$ with $O(k|V(G)|)$ nodes in time $O\left(k^3 |V(G)|\right)$ \cite{Cygan2015ParameterizedA}. Thus, for graphs of bounded treewidth, we can in linear time find a width $k$ extended nice tree decomposition of linear size.

\subsection{HedgeTrimmer: a DP Algorithm for REIC on Bounded Treewidth Graphs}
Let $(T,\{X_t\}_{t\in T})$ be a tree decomposition for $G$ with width $k=\tw(G)$, and let $G$ have $n$ vertices. We can assume without loss of generality that this is an extended nice tree decomposition. For any node $t$, $V_t$ is the union of all bags in the subtree rooted at $t$. A key property of a tree decomposition is that $X_t$ separates $V_t$ from $G\setminus V_t$; specifically, there are no edges in $G$ between $V_t\setminus X_t$ and $V(G)\setminus V_t$. Given that we are focused on connection, it will also be helpful to associate with node $t$ a subgraph $G_t$ of all edges introduced in its subtree:
\[
G_t=\left(V_t,E_t:=\{e:e\text{ is introduced in the subtree rooted at }t\}\right)
\]
Following our bottom-up dynamic programming approach for \myAlgorithm, we propose \ourAlgorithm  for solving REIC on bounded treewidth graphs; to find the solution on a subgraph $G_t$, our algorithm combines the solutions from each of the subgraphs of node $t$'s children. We extend the previous ideas of guarantees $X$ and $Y$ into a \emph{labeling} of a bag $X_t$. Our labeling is a mapping $f:X_t\rightarrow \{0,1\}$ which assigns a condition to each vertex in a bag:
\begin{itemize}
    \item $f(v)=0$: it is guaranteed that $v$ will \emph{not} be able to reach a facility;
    \item $f(v)=1$: it is assumed that $v$ will \emph{be connected} to a facility.
\end{itemize}
Thus, for a node $t$, there are $2^{|X_t|}$ possible labelings of bag $X_t$, as opposed to our four combinations of $XY$ conditions for trees. The key idea of our labeling is that we will only allow new edges to be added between vertices with like labelings, preventing $v$ with $f(v)=0$ from reaching a facility after being connected to $u$ with $f(u)=1$.

 We now define our DP variables for bounded treewidth graphs; for a node $t$, budget $b\leq r$, and a labeling $f$ of $X_t$, let $V(t,b,f)$ denote the maximum total weight of a set of customers $A\subseteq V_t\cap (V(G)\setminus S)$ such that, for a set of edges $R\subseteq E$:
\begin{itemize}
    \item $A\cap X_t=f^{-1}(0)$, i.e., vertices in bag $X_t$ are in $A$ if and only if labeled $0$;
    \item No vertex in $A$ is connected to a facility in $G_t\setminus R$;
    \item $|R|\leq b$.
\end{itemize}
Such a set $R$ is an interdiction strategy on $G$, and an optimal interdiction strategy with budget~$r$ has objective value exactly $V(q,r,f_\emptyset)$, where $q$ is the root of the tree decomposition and $f_\emptyset$ is the empty function. This is the solution value because subgraph $G_q=G$ is the entire graph and, as $X_q=\emptyset$ by definition, the only possible labeling $f:X_q\rightarrow \{0,1\}$ is $f_\emptyset$.

For a labeling $f$ of $X$, we will let $f|_Y$ denote the restriction of $f$ to $Y\subseteq X$. Similarly, we will let $f_{v\mapsto\alpha}$ denote the extension of $f$ where $v$ is added to the domain and mapped to $\alpha$:
\[
f_{v\mapsto \alpha}(u)=\begin{cases}
    f(u)&\text{ if }u\neq v,\\
    \alpha&\text{ if }u= v.
\end{cases}
\]

\subsubsection{DP Equations}
We present the DP equations for \ourAlgorithm, beginning with the base case of leaves and then giving our recursive cases. We note that for a facility $v$ we must have $f(v)=1$, so DP values where $f(v)=0$ are set to $-\infty$.

\subparagraph*{Leaf Node}
Let $t$ be a leaf node, giving that $X_t=\emptyset$. We set $V(t,b,f_\emptyset)=0$ for $b\geq 0$, as the only possible labeling is the empty labeling.

\subparagraph*{Introduce Vertex Node}
Let $t$ be an introduce vertex node with child node $t'$ such that $X_t=X_{t'}\cup \{v\}$ and $v\notin X_{t'}$. As vertices are introduced without any edges, customers introduced are inherently not connected to facilities. Thus, we add value $w(v)$ if and only if we make the guarantee that customer $v$ will not be connected to a facility in the future:
\[
V(t,b,f)=\begin{cases}
    V(t',b,f|_{X_{t'}}) &\text{ if }f(v)=1,\\
    w(v)+V(t',b,f|_{X_{t'}}) &\text{ if }f(v)=0 \text{ and $v$ is customer},\\
    -\infty &\text{ otherwise. }\\
\end{cases}
\]
\subparagraph*{Introduce Edge Node}
Let $t$ be an introduce edge node for edge $(u,v)\in E(G)$ with $u,v\in X_t$ and with child node $t'$ such that $X_t=X_{t'}$. If our labeling of $u$ matches our labeling $v$, then we need not remove the edge to maintain our guarantees (or lack thereof) of facility reachability. More specifically, either $f(u)=f(v)=1$ and both vertices are guaranteed to be disconnected from facilities, or $f(u)=f(v)=0$ and neither vertex have such guarantee; in both cases, $u$ and $v$ can safely be connected while maintaining our guarantees. However, if the labelings of these vertices disagree, say $f(u)=0$ and $f(v)=1$, then we must remove the edge $(u,v)$ so that $u$ cannot reach a facility through $v$, forcing us to spend a unit of budget:
\[
V(t,b,f)=\begin{cases}
    V(t',b,f) &\text{if }f(v)=f(u),\\
    V(t',b-1,f) &\text{otherwise.}
\end{cases}
\]
\subparagraph*{Forget Vertex Node}
Let $t$ be a forget vertex node with child node $t'$ such that $X_t=X_{t'}\setminus \{w\}$ and $w\in X_{t'}$. Since $w\notin X_t$, the new labeling $f$ is compatible with any extension of $f$, allowing $w$ to be mapped to either $0$ or $1$ for bag $X_{t'}$; we take the max over both to find the optimal value:
\[
V(t,b,f)=\max\left\{V(t',b,f_{w\mapsto 0}),V(t',b,f_{w\mapsto 1})\right\}\,.
\]

\subparagraph*{Join Node}
Let $t$ be a join node with two children nodes $t_1,t_2$ such that $X_t=X_{t_1}=X_{t_2}$. The interdiction budget must be split between both children, but we require that both labelings agree completely as a customer can only be disconnected from facilities if it is disconnected from facilities in both subtrees $G_{t_1}$ and $G_{t_2}$. We subtract the weight of customers in $f^{-1}(0)$ from our new solution value to avoid double-counting the value of guaranteed-disconnected customers in $X_t$, where $f^{-1}(0)$ denotes the preimage of $0$ in $f$. To see that customers $v\notin X_t$ can only be counted once, observe that the nodes whose bags contain $v$ would induce a disconnected subtree on $T$ if $v$ was present in bags in the subtrees of both $X_{t_1}$ and $X_{t_2}$. Thus,
\[
V(t,b,f)=\max_{0\leq b_1\leq b}\left\{V(t_1,b_1,f)+V(t_2,b-b_1,f)-\sum_{v\in f^{-1}(0)}w(v)\right\}\,.
\]

\subsubsection{Analysis}

\subparagraph*{Correctness} 
We now show that \ourAlgorithm  is both feasible and optimal. First, we note that any interdiction strategy $R$ defines a global labeling $f^R$ that is respected by some $f^t$ at every bag $X_t$, i.e., $f^R|_{X_t}=f^t$. The feasibility of our DP equations follows from the observation that preserving our labeling $f^R$, and thus the set of disconnected customers, is exactly equivalent to removing all edges between vertices $u$ and $v$ where $f^R(u)\neq f^R(v)$. Therefore, every DP value corresponds to some interdiction strategy.

For optimality, consider the optimal interdiction strategy $R^*$ with $|R^*|\leq r$, which fixes some global labeling $f^{R^*}$. Trivially, $V(q,f_\emptyset,r)$ respects this labeling. As our DP equations take the maximum at every node of tree decomposition $T$, then our algorithm will output a solution with value at least that of $R^*$.

\subparagraph*{Time Complexity and Solution Reconstruction}

Our tree decomposition $T$ has size $|V(T)|=O(kn)$. At every node $t$, we must calculate $2^{|X_{t'}|}\cdot r\leq 2^k\cdot r$ new values of $V$, one for every state (i.e., pair of labelings $f$ and budgets $b\leq r$). When these updates are introduce vertex, introduce edge, or forget vertex nodes, it takes only $O(1)$ work to calculate each new value. However, at a join node $t$, one must take a maximum across all $b_1\leq b$ for each state. Thus, for every labeling $f$, we solve an instance of MCKP to calculate values $V(t,b,f)$ for all budgets $b\leq r$, requiring time $O(r^2)$. Across all labelings, our join node updates will therefore take at most $O(2^k\cdot r^2)$ time, giving a total runtime for \ourAlgorithm  of $O(k2^k\cdot nr^2)$ in the worst case. When treewidth $k$ is bounded by a constant, the algorithm runs in $O(nr^2)$ time and is FPT in parameter $k$. Furthermore, \ourAlgorithm  is FPL in parameters $(k,r)$. This performance matches that of \myAlgorithm, as trees have treewidth $1$. 

As with solving REIC on trees, we can reconstruct our solution by maintaining a predecessor array $P(t,b,f)$ of pointers from every state of inner node $t$ to states for each child of $t$, as well as a similar indicator array $I(t,b,f)$. Solution reconstruction by a post-order traversal of tree decomposition $T$ following our predecessor array $P$ takes time $O(kn)$, which is dominated by the algorithm's runtime. Storing both the dynamic program table and solution reconstruction requires $O(k2^k\cdot nr)$ space.

\subsubsection{Comparison to Courcelle's Theorem}\label{sec:courcelle}
To complete the analysis of our fixed-parameter linear DP algorithms for REIC, we argue that our explicit constructions demonstrate significant advantages over Courcelle's theorem \cite{Courcelle1990}, a meta-algorithm for solving problems definable in monadic second-order logic (MSO) on graphs with bounded treewidth in linear time.
In \cref{sec:mso-alg}, we show that REIC can be formulated as a linear extended MSO extremum problem and hence solved in $O(n)$ time on rational-weighted graphs of bounded treewidth when budget $r$ is constant. 
Notably, both Courcelle's theorem and our DP algorithms are FPL in parameters~$(k,r)$, where $k=\tw(G)$.
However, we give two distinct advantages of our DP algorithms for REIC over the meta-algorithm.

On the theoretical side, Courcelle's theorem can have unbounded dependence on formula size, which includes the budget constraint~$r$.
Thus, Courcelle's theorem does not give an FPT algorithm for REIC for any $r=\omega(1)$ and so is constrained to instances where $r$ is constant;
in contrast, our DP algorithms achieve time complexity at worst $O(n^3)$ when treewidth $k$ is bounded, regardless of the value of $r$.

This dependence of Courcelle's theorem on formula size also leads to a convincing advantage in practical efficiency for our DP algorithms, as the size of the tree automaton construction suffers from extremely large hidden constants in practice.
As noted by Kneis and Langer, these constants are the barrier to there being a practical implementation of Courcelle's theorem
\cite{KneisLanger09}.  
For some intuition on this blow-up of automaton size, \cite{KneisLangerRossmanith11} observes that each quantifier alternation requires a power set construction for the automaton; in our case, the formula for REIC-sat involves two quantifier alternations (from $\forall$ to $\exists$ to $\forall$).

In contrast to Courcelle's theorem, our DP approach unconditionally offers a direct and computationally efficient algorithm for this problem. We demonstrate our approach's empirical effectiveness in our numerical study of \myAlgorithm in \cref{sec:experimental}, and while we did not implement \ourAlgorithm for bounded treewidth graphs, we expect it to be similarly performant due to the shared DP structure and FPL complexity with our algorithm for trees.

\section{The \texorpdfstring{$r$}{r}-Facility Interdiction Covering Problem}
\label{sec:facility}

A natural variant of REIC is to consider facility removal as opposed to edge removal, as in the original $r$-interdiction covering problem for facility location~\cite{ChurchSM04}. 
In the graph setting, we arrive at the $r$-facility interdiction covering problem~(RFIC), where an interdiction strategy for the same covering objective consists of a subset of facilities $R\subset S$ subject to budget constraint $|R|\leq r$.
Letting $C_G(v,R)=1$ indicate that customer $v \in V\setminus S$ can reach some facility $s\in S\setminus R$ in the graph $G\setminus R$, we define the problem.
\begin{definition}[RFIC]\label{def:rfic}
    In the $r$-facility interdiction covering problem, we are given a graph $G=(V,E)$ with facilities $S\subset V$, customer weights $w:V\setminus S\rightarrow \mathbb{R}_{\geq 0}$, and an integer budget $r<|S|$, with the goal of selecting a subset $R\subset S$ with $|R|\leq r$ maximizing $\sum_{v\in V\setminus S}w(v)(1-C_G(v,R))$.
\end{definition}
Modifying our DP approach for REIC from \myAlgorithm, we give an $O(nr^2)$-time algorithm for RFIC on trees.
To complement this FPL result for trees, we show that RFIC on general graphs is W[1]-hard by a parameterized reduction from \textsc{Clique}.

We also prove hardness of approximation results for RFIC, which is complicated by the fact that the problem has two natural approximation objectives, depending on whether interpreted as (i) a minimization problem for the weight of covered customers (RFIC-min), or (ii) a maximization problem for the weight of disconnected customers (RFIC-max).
We give an approximation-preserving reduction from small set bipartite vertex expansion (SSBVE) to RFIC-min, observing that RFIC generalizes SSBVE to a weighted setting.\footnote{Either of these reductions can be taken to show that RFIC is NP-hard.} 
This gives an $O(n(n-k)^2)=O(n^3)$-time algorithm for SSBVE on trees, while also proving RFIC-min is as hard to approximate as small set bipartite vertex expansion (and hence smallest $p$-edge subgraph).
We also note that our parameterized reduction from \textsc{Clique} shows that RFIC-max is as hard to approximate as densest $k$-subgraph.

\subsection{DP Algorithm for RFIC on Trees}\label{sec:facility-DP}
Following \myAlgorithm, we describe a DP algorithm for solving RFIC on trees which traverses the tree upwards from leaves and maximizes the total weight of customers disconnected within the current subtree. Again, we build upon conditions $XY$; fixing node $v$ with parent $w$, assumption $X=0$ promises that no facility is reachable from $v$ through $w$, and guarantee $Y=0$ requires that no facility is reachable from $v$ in the subtree of $v$. \(V_{XY}(v, b)\) denotes the maximum objective achievable in the subtree rooted at \(v\) under conditions \(XY\) and with node removal budget \(b\). We begin with our base case by defining values $V_{XY}(v,b)$ for leaf nodes.

\subparagraph*{Leaves}

Let \(v\) be leaf of the tree.

\begin{itemize}
\item For a facility leaf node, we must spend one unit of budget and remove $v$  to enforce $Y=0$, or otherwise set $Y=1$; $X$ does not affect this. Thus, for all $b\leq r$,
  \[
    V_{XY}(v,b) = \begin{cases}
      0 & \text{if } Y=1 \text{ or } b\ge 1,\\
      -\infty & \text{otherwise}.
    \end{cases}
  \]
\item For a customer leaf node, we can gain value $w(v)$ by guaranteeing $X=0$, and we cannot enforce $Y=1$. Therefore, for all $b\leq r$,
  \[
    V_{XY}(v,b) = \begin{cases}
      -\infty & \text{if } Y=1,\\
      w(v) & \text{if } Y=0 \text{ and } X=0,\\
      0 & \text{otherwise}.
    \end{cases}
  \]
  \end{itemize}
With DP values calculated for each leaf, we now consider an inner node \(v\) and its children \(u_1, \ldots, u_k\), calculating \(V_{XY}(v,b)\) for all budgets \(b \leq r\) and cases \(XY \in \{00,01,10,11\}\) by solving a multiple choice knapsack problem.

\subparagraph*{Inner Nodes: Case $V_{X0}$}

If $v$ is a facility, we must remove $v$ to enforce $Y=0$, regardless of condition $X$. Thus, for all $b\leq r$,
\begin{align}
V_{X0}(v,b) = \max_{b_i:\sum_{i=1}^k b_i = b-1} \left\{ \;\sum_{i=1}^k\;\max\left\{
V_{00}(u_i,b_i),
V_{01}(u_i,b_i)
\right\} \right\} \,.
\end{align}
If $v$ is a customer and $X=0$, then in order to maintain $Y=0$, we must require that every child $u_i$ has $Y=0$. Thus, for all $b\leq r$,
\begin{align}
V_{00}(v,b) = w(v) + \max_{b_i:\sum_{i=1}^k b_i = b}
    \left\{ \sum_{i=1}^k 
        V_{00}(u_i, b_i)
     \right\}\,.
\end{align}
If $v$ is a customer and $X=1$, then in order to maintain $Y=0$, we must require that every child $u_i$ has $Y=0$. Thus, for all $b\leq r$,
\begin{align}
V_{10}(v,b) = \max_{b_i:\sum_{i=1}^k b_i = b}
    \left\{ \sum_{i=1}^k 
        V_{10}(u_i, b_i)
     \right\}\,.
\end{align}

\subparagraph*{Inner Nodes: Case $V_{X1}$}

\begin{align}\label{eq:rfic_V11_facility}
V_{X1}(v,b) = \max_{b_i:\sum_{i=1}^k b_i = b} \left\{ \;\sum_{i=1}^k\;
\max\left\{
V_{10}(u_i,b_i),
V_{11}(u_i,b_i)
\right\} \right\}\,.
\end{align}
When $v$ is a customer, we must again enforce that $v$ is connected to a facility in its subtree with additional variables $\{z_i\}_{i=1}^{k}\subseteq \{0,1\}^k$ such that $\sum_{i=1}^{k} z_i\geq 1$. Now, when $v$ is a customer, we have that for all $b\leq r$,
\begin{align}\label{eq:rfic_V11_customer}
V_{X1}(v,b) = \max_{b_i,z_i:\sum_{i=1}^k b_i = b, \sum_{i=1}^k z_i \geq 1}
\Bigg\{ \;\sum_{i=1}^k\;  (1-z_i) V_{10}(u_i,b_i)
+ z_i V_{X1}(u_i,b_i)\Bigg\}\,.
\end{align}

\subparagraph*{Time Complexity and Solution Reconstruction}
As in \myAlgorithm, this algorithm for node removal incurs cost $O(kr^2)$ at every inner node $v$ by solving a multiple-choice knapsack instance, where $k$~is the number of children of~$v$. By the same charging argument, the time spent across all inner nodes is $O(nr^2)$, dominating the overhead for instantiation of leaves and explicit reconstruction of the optimal interdiction strategy by post-order traversal of predecessor and indicator arrays. Thus, this algorithm runs in time $O(nr^2)$.

\subsection{Reduction from SSBVE to RFIC}
\label{sec:SSBVE}
We begin by defining SSBVE, where $N_G(S)$ denotes the neighborhood of $S$ in $G$. We drop the subscript when $G$ is clear from context.
\begin{definition} [SSBVE]
    In the small set bipartite vertex expansion problem, we are given a bipartite graph $G=(U\cup V, E)$ and an integer $k<|U|$, with the goal of selecting a subset $S\subset U$ with $|S|=k$ minimizing $|N_G(S)|$.
 \end{definition}
 An algorithm is an $\alpha$-approximation for SSBVE if it returns a set $S$ with $|N(S)|\leq \alpha|N(S^*)|$, where $S^*$ is the optimal set. 
 Now, we give an approximation-preserving reduction from SSBVE to the \emph{minimization} interpretation of $r$-facility interdiction covering, where the objective is to minimize (subject to the budget constraint) the total value of customers covered by remaining facilities $S\setminus R$, denoted $\gamma(R):=\sum_{v\in V\setminus S}w(v)C_G(v,R)$.
\begin{theorem}
   If there exists a polynomial-time $\alpha(|S|)$-approximation algorithm for RFIC-min, then there exists a polynomial-time $\alpha(|U|)$-approximation for SSBVE.
\end{theorem}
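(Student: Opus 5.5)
We will build an RFIC instance directly from the bipartite graph and show that the two problems are the same up to a change of variable. Given an SSBVE instance $G=(U\cup V,E)$ with parameter $k<|U|$, let the RFIC graph be $G$ itself. The facilities are $S:=V$, the customers are the vertices of $U$, each with weight $w(u)=1$, and the budget is $r:=|V|-\ldots$. A better choice of roles is available, however. In SSBVE we pick $k$ vertices of $U$ and pay for their neighbourhood. In RFIC-min we remove facilities and pay for the customers that are still covered. So we will take the facilities to be the left side $U$ and the customers to be the right side $V$ with unit weights. Removing a set $R\subseteq U$ of facilities leaves $U\setminus R$ in place. Because $G$ is bipartite and $R$ is deleted, a customer $v\in V$ can reach a remaining facility in $G\setminus R$ exactly when $v$ lies in a connected component that contains a vertex of $U\setminus R$. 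To make reachability coincide with adjacency, we will instead use a graph in which customers cannot relay connections. Concretely, keep only the bipartite edges and observe that any path from $v$ to a facility $s\in U\setminus R$ first steps to a neighbour of $v$ in $U$. If that neighbour is in $U\setminus R$, then $v\in N(U\setminus R)$. If every neighbour of $v$ lies in $R$, then all of $v$'s neighbours are deleted and $v$ is isolated. Hence $C_G(v,R)=1$ if and only if $v\in N(U\setminus R)$, so $\gamma(R)=|N(U\setminus R)|$.

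Next we set the budget to $r:=|U|-k$. Since $|S|=|U|$, this satisfies $r<|S|$ as the definition of RFIC requires, and $r\ge 1$ because $k<|U|$. A strategy $R$ with $|R|\le r$ corresponds to the kept set $S':=U\setminus R$ with $|S'|\ge k$. The SSBVE objective is monotone: shrinking $S'$ to any $k$-subset can only shrink $N(S')$. Therefore the optimal RFIC-min value equals the optimal SSBVE value $|N(S^*)|$. Conversely, any feasible $R$ yields an SSBVE solution $S''\subseteq U\setminus R$ of size exactly $k$ with $|N(S'')|\le\gamma(R)$, and $S''$ can be obtained by arbitrary truncation in polynomial time.

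The approximation bound follows directly. Run the assumed $\alpha(|S|)$-approximation on this instance to get $R$ with $\gamma(R)\le\alpha(|S|)\,\mathrm{OPT}_{\mathrm{RFIC}}=\alpha(|U|)\,|N(S^*)|$, then truncate $U\setminus R$ to $k$ vertices. The reduction is clearly polynomial, and $|S|=|U|$ gives the stated ratio $\alpha(|U|)$.

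The main obstacle is the reachability argument: we must verify that paths through other customers cannot create coverage beyond adjacency. The key point is that every customer's neighbours are facilities, so any path leaving $v$ immediately hits a facility. If that facility survives, $v$ is covered by a neighbour; if it was removed, the path is blocked. Hence coverage is exactly $v\in N(U\setminus R)$. A secondary detail is handling $\mathrm{OPT}=0$ and the edge case $|R|<r$; both are covered by the monotone truncation step above.
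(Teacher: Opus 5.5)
Your proposal is correct and follows the paper's reduction: facilities $S=U$, unit-weight customers $V$, budget $r=|U|-k$, the observation $\gamma(R)=|N(U\setminus R)|$, and outputting (a $k$-subset of) $U\setminus R$. Your monotone truncation step also covers the case $|R|<r$, which the paper sidesteps by assuming the approximation returns $|R|=|U|-k$ exactly. You should delete the abandoned opening (facilities $S:=V$, budget $|V|-\ldots$) and the remark about switching to ``a graph in which customers cannot relay connections,'' since the original bipartite $G$ already has that property.
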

\begin{proof}
    Let $G=(U\cup V, E)$ be a bipartite graph, and let $I=(G,k)$ be an instance of SSBVE. We define $G'$ to be the graph $G$ with facility set $S=U$ and unit customer weights $w(v)=1$ for all $v\in V$. Thus, we construct an RFIC instance $I'=(G',|U|-k)$, where the budget $r=|U|-k$. Now, let $\mathcal{A}$ be an $\alpha$-approximation for RFIC, so that $\mathcal{A}(I')$ returns an interdiction strategy $R\subseteq U, |R|=|U|-k$ with objective value $\gamma(R)\leq \alpha \gamma(R^*)$, where $R^*$ is the optimal interdiction strategy. We observe that $v\in V$ is covered by remaining facilities $U\setminus R$ if and only if $v\in N(U\setminus R)$. Thus, $\gamma(R)=\sum_{v\in N(U\setminus R)}w(v)=|N(U\setminus R)|$. Now, we claim that outputting the set $U\setminus R$ gives an $\alpha$-approximation for $I$. First, it is easy to see that $|U\setminus R|=k$. Next, we use the fact that $|N(U\setminus R)|=\gamma(R)\leq \alpha \gamma(R^*)\leq \alpha |N(U\setminus R^*)|$. Since $R^*\subseteq U$ is a set of $|U|-k$ facilities minimizing the weight of covered customers by $U\setminus R^*$, then $U\setminus R^*$ is a set of $k$ vertices $Z\subseteq U$ minimizing $|N(Z)|$. Thus, $U\setminus R$ is an $\alpha$-approximation instance $I$ of SSBVE. This reduction is linear-time.
\end{proof}
We note that this does not alter the structure of $G$ and hence gives an exact algorithm for SSBVE on trees by employing our exact algorithm (i.e., our 1-approximation) for RFIC on trees.\footnote{By this observation, a similar extension of our DP algorithm for REIC on bounded treewidth graphs to RFIC would give an algorithm for SSBVE on bounded treewidth graphs.} 
The key step in this reduction is observing that $v$ is covered by facilities $U\setminus R$ when $v\in N(U\setminus R)$, and hence $\gamma(R)=\sum_{v\in N(U\setminus R)} w(v)$. 
This expression captures the fact that RFIC is a ``weighted version'' of SSBVE, but does not extend beyond this setting where a facility can only cover customers in its neighborhood. 
However, we make this intuition explicit by proving that any instance of RFIC is equivalent to a bipartite instance of RFIC with partition $U=S$, which we call bip-RFIC.

\begin{lemma}\label{lem:bipartite}
For any instance of RFIC, there is an equivalent instance of bip-RFIC.
\end{lemma}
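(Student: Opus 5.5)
Given an RFIC instance $(G=(V,E),S,w,r)$, I would build the bipartite graph $H$ with left side $U=S$ and right side $V\setminus S$ (the customers). A facility $s$ is joined to a customer $v$ exactly when $G$ has an $s$--$v$ path whose internal vertices are all customers. Equivalently, $v$ lies in a connected component $K$ of $G[V\setminus S]$ and $s$ has a neighbor in $K$. Weights and budget stay the same, so the new instance is $(H,S,w,r)$. The construction takes polynomial time: compute the components of $G[V\setminus S]$, then for each facility mark the components it touches.

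The core step is to show that for every $R\subseteq S$ and every customer $v$, we have $C_G(v,R)=1$ if and only if $C_H(v,R)=1$.

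\emph{Easy direction.} Suppose $C_H(v,R)=1$. Then $v$ is adjacent in $H$ to some $s\in S\setminus R$. By construction there is a path in $G$ from $v$ to $s$ whose internal vertices are customers. None of those vertices lie in $R\subseteq S$, so the path survives in $G\setminus R$.

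\emph{Other direction.} Suppose $G\setminus R$ contains a path from $v$ to some surviving facility. Take the first facility $s$ on that path. Then $s\notin R$, since the path lives in $G\setminus R$, and the prefix of the path up to $s$ has only customers as internal vertices. So $(s,v)$ is an edge of $H$, and $C_H(v,R)=1$.

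The main obstacle is exactly this minimality argument: a surviving path in $G$ may pass through other surviving facilities, and we must see that cutting at the first one still yields an $H$-edge. One should also note explicitly that in $H$ a customer can reach a facility only via a direct edge, because customers are pairwise non-adjacent. Paths through other facilities in $H$ would be irrelevant anyway, since reaching any surviving facility already suffices.

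From the pointwise equivalence, the covered weight $\gamma(R)$ and the disconnected weight agree on every feasible $R$, and the feasible sets $\{R\subseteq S:|R|\le r\}$ are identical in the two instances. The instances are therefore equivalent in the strong sense: same solution space and same objective, for both RFIC-min and RFIC-max.
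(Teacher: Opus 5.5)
Your proof is correct. It rests on the same fact as the paper's proof: a customer is covered under $R$ exactly when some facility outside $R$ borders its component of $G\setminus S$. Your first-facility-on-the-path argument spells out what the paper states briefly, namely that if $S_i\subseteq R$ then every facility adjacent to a customer reachable from $v$ through customers is interdicted.

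Where you differ is the construction.
\begin{itemize}
\item The paper contracts each component $\mathcal{C}_i$ into a single customer $v_i$ of weight $\sum_{v\in\mathcal{C}_i}w(v)$, adjacent to $S_i=N_G(\mathcal{C}_i)\cap S$. It then compares objective values component by component.
\item You keep every customer with its original weight and join it to all of $S_i$. This gives the finer per-vertex identity $C_G(v,R)=C_H(v,R)$.
\end{itemize}

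Each choice buys something. The paper's $G'$ is a minor of $G$: contract each $\mathcal{C}_i$, then delete facility--facility edges. So it has at most $|E|$ edges, and it stays a forest, planar, or of bounded treewidth whenever $G$ is. Your $H$ contains a biclique $K_{|\mathcal{C}_i|,|S_i|}$ for each component. For instance, a path of $m$ customers, each carrying a pendant facility, is a tree but becomes $K_{m,m}$ under your construction, while the paper's $G'$ is a star. On the other hand, your $H$ leaves $w$ untouched. Unit-weight RFIC instances therefore map to unit-weight bip-RFIC instances, which correspond directly to SSBVE instances with $k=|S|-r$. The contraction instead introduces weights $|\mathcal{C}_i|$.
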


\begin{proof}
    Let $G=(V,E)$ be a graph with facilities $S$ and customer weights $w$. Now, consider the components of $G\setminus S$, which we label $\mathcal{C}_1,\ldots,\mathcal{C}_d$. Define a set of customers $C=\{v_i:i\in [d]\}$ with weight function $w':C\rightarrow \mathbb{R}_{\geq 0}$ given by $w'(v_i)=\sum_{v\in \mathcal{C}_i}w(v)$. Lastly, define a new set of edges $E'=\{(N_G(\mathcal{C}_i)\cap S)\times\{v_i\}:i\in [d]\}$, which connects every vertex $v_i$ to each facility neighboring a customer in $\mathcal{C}_i$. Then we claim solving RFIC on the bipartite graph $G'=(S\cup C, E')$ with facilities $S$ and customer weights $w'$ is equivalent to solving RFIC on the original graph $G$. To see this, we will show that every interdiction strategy $R\subset S$ achieves the same objective value on $G$ and $G'$. 

    Here, we use the minimization interpretation of RFIC, letting $\gamma_G(R)$ denote the objective value of strategy $R$ in the original graph and $\gamma_{G'}(R)$ denote the objective value of $R$ in the bipartite graph. First, we observe that $v_i$ is only disconnected if each facility in $S_i:=N_G(\mathcal{C}_i)\cap S$ is removed, i.e., if $S_i\subseteq R$. Thus, $\gamma_{G'}(R)=\sum_{i:S_i\nsubseteq R} w'(v_i)$. Now, consider some customer in the original graph $v\in \mathcal{C}_i$. If $S_i\subseteq R$, then every facility adjacent to any customer reachable from $v$ by a path of customers is interdicted, and so $v$ cannot reach a facility. On the other hand, assume $S_i\nsubseteq R$, i.e., there is a facility $s\in S_i, s\notin R$. Since $s\in S_i$, there must be some customer $u\in \mathcal{C}_i$ s.t. $s$ is adjacent to $u$. As $u$ and $v$ are both in $\mathcal{C}_i$, they are connected in the graph $G\setminus S$ and hence in the graph $G\setminus R$. Thus, by concatenation we obtain the path $v,\dots,u,s$ connecting $v$ to a facility, and hence $v$ is still covered under interdiction strategy $R$. Therefore, the objective value of $R$ in the original graph is $\gamma_{G}(R)=\sum_{i:S_i\nsubseteq R}\sum_{v\in \mathcal{C}_i}w(v)=\gamma_{G'}(R)$, proving the claim.
\end{proof}

\subsection{W[1]-Hardness of RFIC}\label{sec:hardness}
We show a reduction to $r$-facility interdiction covering from $\textsc{Clique}$, a canonical W[1]-complete problem. For the decision version of RFIC, an instance $(G,r,k)$ is a yes-instance if there is an interdiction strategy $R$ subject to budget constraint $|R|\leq r$ with interdiction value $k$.

\begin{theorem}
    The decision version of RFIC is $\textup{W}[1]$-hard.
\end{theorem}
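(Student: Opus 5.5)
We reduce from \textsc{Clique} parameterized by clique size $\ell$, using the bipartite viewpoint of \cref{lem:bipartite}, with facilities playing the role of graph vertices and customers the role of edges. Given a \textsc{Clique} instance $(H,\ell)$ with $H=(V_H,E_H)$, build the incidence graph $G$. There is one facility $s_x$ for each $x\in V_H$ and one customer $c_e$ of unit weight for each edge $e=(x,y)\in E_H$. The customer $c_e$ is adjacent to exactly $s_x$ and $s_y$. Set the budget $r=\ell$ and the target interdiction value $k=\binom{\ell}{2}$. This construction is polynomial, and the new parameter (either $r$ or $k$) is a function of $\ell$ alone. Hence it is a valid parameterized reduction for either parameterization of RFIC.

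For correctness, first note which customers are disconnected. Every customer is adjacent only to facilities, so $G\setminus R$ leaves $c_e$ isolated exactly when both endpoints of $e$ lie in $R$; otherwise $c_e$ is adjacent to a surviving facility. So the interdiction value of $R\subseteq S$ equals the number of edges of $H$ induced by $\{x: s_x\in R\}$.

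The two directions then follow.
\begin{itemize}
\item[($\Rightarrow$)] If $H$ has an $\ell$-clique $K$, take $R=\{s_x:x\in K\}$. It uses budget $\ell$ and disconnects $\binom{\ell}{2}$ customers.
\item[($\Leftarrow$)] Suppose $|R|\le \ell$ and at least $\binom{\ell}{2}$ customers are disconnected. Any vertex set of size at most $\ell$ induces at most $\binom{|R|}{2}\le\binom{\ell}{2}$ edges. Equality forces $|R|=\ell$ and the induced subgraph to be complete, so $H$ has an $\ell$-clique.
\end{itemize}
The decision version should be read as asking for interdiction value at least $k$. If the paper insists on value exactly $k$, the bound above still shows that value exactly $\binom{\ell}{2}$ forces a clique.

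The main obstacle is bookkeeping rather than anything deep: respecting the definition's technical conditions. We need $r<|S|$, which holds whenever $\ell<|V_H|$; trivial instances can be handled directly. We also need the reduction to be parameter-preserving. Finally, we observe that the same construction maps densest $\ell$-subgraph instances to RFIC-max with identical objective values. This is what yields the approximation-hardness remark mentioned earlier.
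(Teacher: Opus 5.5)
Your proposal is correct and is essentially the paper's proof: the same subdivision (incidence-graph) construction with facilities $S=V$, one unit-weight customer $v_e$ per edge adjacent to both endpoints, budget $k$, and target value $\binom{k}{2}$, using the fact that $v_e$ is disconnected if and only if both endpoints of $e$ are in $R$. Your extra remarks on reading the decision version as ``at least'' versus ``exactly'', on the parameter choice, and on the condition $r<|S|$ fill in details the paper leaves implicit and do not change the argument.
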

\begin{proof}
   To show that the problem is $\textup{W}[1]$-hard, we give a parameterized reduction from \textsc{Clique}. An instance $(G, k)$ of \textsc{Clique} is a yes-instance if $G=(V,E)$ contains the clique on $k$ vertices $K_k$ as an induced subgraph, and a no-instance otherwise. 
   
   We now define graph $G'$, such that $G$ has a $k$-clique if and only if $G'$ has an interdiction strategy of budget $k$ with value $\binom{k}{2}$. Let $V_E=\{v_e:e\in E\}$ be a set of vertices `bisecting' every edge of $G$, such that edges $E'=\{(v_e,a):e=(a,b)\in E\}$ connect new vertices $v_{(a,b)}\in V_E$ to the original vertices $a,b\in V$. Let $G'=(V\cup V_E, E')$, with facilities $S=V$ and unit customer weights $w(v_e)=1$. Observe that for edge $e=(a,b)$, every path from customer $v_e$ to a facility goes through $a$ or $b$. Thus, $v_e$ is disconnected if and only if $\{a,b\}\subseteq R$. Equivalently, a customer $v_e$ is disconnected if and only if edge $e$ is in the induced subgraph $G(R)$. By our budget constraint, this subgraph has at most $k$ vertices, so the instance of RFIC has value $\binom{k}{2}$ if and only if there is a set $R$ such that  $G(R)$ has $\binom{k}{2}$ edges, i.e., $G(R)=K_k$. Therefore, $(G,k)$ is a yes-instance if and only if $(G,k,\binom{k}{2})$ is a yes-instance, completing the reduction. 
\end{proof}

We can also consider the \emph{maximization} version of $\textsc{Clique}$, the densest $k$-subgraph problem (DkS), to show hardness of approximation for the maximization interpretation of RFIC, as presented in \cref{def:rfic} but with facility removal. The following result follows from the same reduction as before.
\begin{corollary}
   If there exists a polynomial-time $\alpha(|S|)$-approximation algorithm for RFIC-max, then there exists a polynomial-time $\alpha(n)$-approximation for DkS, where $n$ is the number of nodes in the graph for the DkS instance.
\end{corollary}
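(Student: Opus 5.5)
We reuse the parameterized reduction from the W[1]-hardness proof, which was stated for \textsc{Clique} but works equally for densest $k$-subgraph. Take a DkS instance $(G,k)$ with $G=(V,E)$ and $n=|V|$. Subdivide every edge $e=(a,b)$ by a customer $v_e$ of unit weight, make the original vertices $V$ the facilities, and set the budget to $r=k$. This is the same graph $G'$ as before. The facility set then has $|S|=n$, and the construction is polynomial (indeed linear) in the size of $G$.

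The key correspondence was already shown in that proof. For edge $e=(a,b)$, customer $v_e$ is disconnected under $R\subseteq V$ if and only if $\{a,b\}\subseteq R$. The RFIC-max objective of $R$ is therefore exactly $|E(G[R])|$, the number of edges induced by $R$. A strategy with $|R|\le k$ corresponds to a vertex set of size at most $k$, and its value equals the density objective of DkS.

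Next we match the two feasible sets. DkS asks for exactly $k$ vertices, while RFIC allows $|R|\le k$. Adding vertices never decreases the number of induced edges, so any $R$ with $|R|<k$ can be padded with arbitrary vertices to size exactly $k$ without losing value (assuming $k\le n$; also $r<|S|$ is needed, so we handle $k=n$ trivially). Hence the optima agree, $\mathrm{OPT}_{\mathrm{RFIC}}(G')=\mathrm{OPT}_{\mathrm{DkS}}(G)$.

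Finally, run the assumed $\alpha(|S|)$-approximation on $G'$. It returns $R$ with $|E(G[R])|\ge \mathrm{OPT}/\alpha(|S|)=\mathrm{OPT}/\alpha(n)$. After padding, this $R$ is a $k$-subgraph achieving ratio $\alpha(n)$. The only point needing care is checking that the approximation parameter transfers correctly: $|S|=n$ exactly, and the approximation convention (ratio $\ge 1$ dividing OPT) is the same for both maximization problems. This is straightforward, so I expect no real obstacle beyond stating the padding step and the degenerate case $k=n$ cleanly.
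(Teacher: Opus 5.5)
Your proposal is correct and takes the paper's approach. The paper only remarks that the corollary follows from the same edge-subdivision reduction used for W[1]-hardness (facilities $S=V$, so $|S|=n$, and the objective of $R$ equals $|E(G[R])|$). Your padding argument for $|R|<k$ and your handling of the boundary case $k=n$ (needed because RFIC requires $r<|S|$) supply details the paper leaves implicit.
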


\section{Conclusions and Future Directions}\label{sec:conclusions}
We provided an efficient solution for determining edge removals that prevent the most value of customers from reaching facilities in trees and bounded treewidth graphs. Our numerical results on trees showed that our algorithm is significantly faster than state-of-the-art baselines. Further study by evaluation on real-world datasets could be informative, such as in the motivating satellite networks application. 
We also considered a variant where facilities are removed, proving it to be a generalization of small set bipartite vertex expansion and densest $k$-subgraph. Future work could study whether known algorithms for these problems generalize to RFIC, or if this weighted generalization is fundamentally more difficult. 
Lastly, our study invites the question of whether these interdiction problems are tractable on other sparse families of graphs; interdicting the covering objective relies on disconnection in edge and facility removal variants, and so separator techniques for planar graphs might be applicable.


\newpage 
\bibliographystyle{plainurl}
\bibliography{main}

\begin{appendices}
\crefalias{subsection}{appendix}
\section{Constrained Multiple-Choice Knapsack Problem} 
\label{sec:MCKP}
In this section, we describe the constrained multiple-choice knapsack problem~(CMCKP) and give an $O(NC^2)$-time dynamic programming algorithm based on the algorithm for MCKP \cite[Section 11.5]{Kellerer04}. In the multiple choice variant of the classic knapsack problem, we are packing items from $k$ mutually disjoint classes, or ``buckets,'' into a knapsack of capacity $C$, with the additional constraint that we must select exactly one item from each bucket. For the ``constrained'' problem, a subset of items satisfy a given property of interest, and we must select at least one item satisfying the property.

For notation, we define buckets $i= 1, \dots, k$ where bucket $i$ contains $N_i > 0$ items. We use the following notation for $i=1, \dots, k$ and $j=1,\dots,N_i$:
\begin{itemize}
    \item $x_{ij}\in \{0,1\}$ represents whether item $j$ is included from bucket $i$;
    \item $c_{ij}\geq 0$ and $v_{ij}\geq 0$ represent the cost and value accrued by including such item;
    \item $z_{ij}\in \{0,1\}$ represents whether item $j$ in bucket $i$ satisfies a property of interest.
\end{itemize}

For this knapsack problem, we want to maximize the total profit given a maximum capacity $C$, subject to requiring in our knapsack exactly one item from each bucket and at least one item satisfying the property of interest.
In our case, we actually want the optimal profit for all capacities $0\leq c\leq C$, which is naturally given as the output of the following DP formulation.
For some budget $0 \leq c \leq C$, consider the subproblem $V_Z(t,c)$ restricted to the first $1 \leq t \leq k$ buckets, where the ``property of interest'' constraint is satisfied only when $Z=1$,
\begin{align*}
V_0(t,c) := &\max\left\{\sum_{i=1}^t\sum_{j=1}^{N_i} v_{ij}x_{ij}: \sum_{i=1}^t\sum_{j=1}^{N_i} c_{ij}x_{ij} \leq c, \sum_{j=1}^{N_i} x_{ij} = 1 \;\forall i, \sum_{i=1}^t\sum_{j=1}^{N_i} z_{ij}x_{ij} = 0\right\}\,,\\
V_1(t,c) := &\max\left\{\sum_{i=1}^t\sum_{j=1}^{N_i} v_{ij}x_{ij}: \sum_{i=1}^t\sum_{j=1}^{N_i} c_{ij}x_{ij} \leq c, \sum_{j=1}^{N_i} x_{ij} = 1  \;\forall i, \sum_{i=1}^t\sum_{j=1}^{N_i} z_{ij}x_{ij} \geq 1\right\}\,.
\end{align*}

We can iteratively calculate the DP values $V_Z(t,\cdot)$ for $Z\in\{0,1\},t=1,\ldots, k$, ultimately returning the set of optimal knapsacks $V_1(t,\cdot )$, which must satisfy constraint $Z=1$. This is explicitly described in \cref{alg:MCKP}, and we can reconstruct our solution by maintaining a simple predecessor array as described in \cref{sec:reconstruction}. Both solving and reconstructing our DP solutions can be done in time $O(NC^2)$, where $N=\sum_{i=1}^{k}N_i$ is the total number of items.

\SetKwProg{Fn}{Function}{}{end}%
\SetKwFunction{CMCKP}{CMCKP}%

\begin{algorithm}[t]
\caption{Constrained Multiple-Choice Knapsack Problem}\label{alg:MCKP}
\DontPrintSemicolon
\Fn{\CMCKP{\texttt{costs}, \texttt{values}, \texttt{properties}, $C$}}{
\SetAlgoLined
\SetKwInOut{KwIn}{Input}
\SetKwInOut{KwOut}{Output}

\KwIn{arrays \texttt{costs}, \texttt{values}, and \texttt{properties} indexed by bucket $1,\ldots,k$ and item $1,\ldots,N_i$ in bucket $i$, and an integer knapsack capacity $C$}
\KwOut{optimal solution values of CMCKP for capacities $0\leq c\leq C$}

$V_0,V_1\gets -\infty\in\mathbb{R}^{(k+1)\times(C+1)}$\;
$V_0[0,\cdot\ ]\gets 0$\;

\For{$i\gets1$ \KwTo $k$}{
  \For{$j\gets1$ \KwTo $N_i$}{
    $c_{ij}\gets\texttt{costs}[i][j]$, $v_{ij}\gets\texttt{values}[i][j]$, $p_{ij}\gets\texttt{properties}[i][j]$\;
    \For{$c\gets 0$ \KwTo $C-c_{ij}$}{
        $x_Z\gets V_Z[i-1,c+c_{ij}]+v_{ij}$\;
      \eIf{$p_{ij}=0$}{
        $V_Z[i,\,c]\gets \max\bigl(V_Z[i,\,c],\,x_Z\bigr)$
      }{
        $V_1[i,\,c]\gets \max\bigl(V_1[i,\,c],\,x_0,\,x_1\bigr)$
      }
    }
  }
}

\Return{$V_1[k,\cdot \ ]$}  
}
\end{algorithm}

\section{Courcelle's Theorem and MSO Formulation of REIC}\label{sec:mso}

In this section, we give a succinct treatment of MSO logic and linear extended MSO extremum problems in order to present \cref{thm:linemsol}, an extension of Courcelle's theorem for linear optimization problems introduced by Arnborg, Lagergren, and Seese \cite{ArnborgLagergrenSeese91}. We then give a satisfaction formulation of REIC in MSO, allowing us to define REIC as a linear extended MSO extremum problem; by \cref{thm:linemsol}, we obtain a linear time algorithm for REIC on rational-weighted graphs of bounded treewidth when the budget $r$ is fixed. We loosely follow the notation of \cite{BurtonDowney17Triangulations}, while also incorporating evaluation functions as described in  \cite{ArnborgLagergrenSeese91,CourcelleMakowskyRotics98}.

\subsection{MSO Logic and Courcelle's Theorem for Optimization}\label{sec:mso-background}
Monadic second-order logic (MSO) is the fragment of second-order logic where second-order quantification is limited to quantification over sets. For MSO on graphs, we use what is formally known as \emph{extended} MSO logic, which allows direct access to variables and sets of edges. MSO logic supports standard boolean operations, such as $\wedge,\vee,\neg$, and $\rightarrow$, as well as variables for nodes, edges, sets of nodes, and sets of edges. Upon these variables, MSO supports $\in,=,$ incidence, and adjacency binary relations, as well as $\forall$ and $\exists$ quantification over all variables.

We use the notation $\varphi(x_1,\ldots,x_t)$ to denote an MSO formula with $t$ free variables, which are placeholder variables not bound by $\forall$ or $\exists$ quantifiers. An MSO sentence is an MSO formula with no free variables. For simple graph $G$ and MSO sentence $\varphi$, the notation $G\vDash \varphi$ signifies that the interpretation of $\varphi$ in graph $G$ is a true statement.

We now define a linear extended MSO extremum problem, restricting our class of structures to be the class of simple graphs. For a set variable $X$, we use $|X|_j$ to denote the evaluation $\sum_{a\in X}f_j(a)$ for evaluation function $f_j$.
\begin{definition}
    A \emph{linear extended MSO extremum problem} $P$ consists of an MSO formula $\varphi(A_1,\ldots,A_t)$ with free set variables $A_1,\ldots,A_t$, evaluation functions $f_1$,..., $f_m$ associating rational values to the elements of $G$, and a rational linear function $g(x_1,\ldots,x_{tm})$, for constants $t$ and $m$. Its interpretation is as follows: given a simple graph $G$ as input, we are asked to maximize $g(|A_1|_1, \ldots,|A_t|_m)$ over all sets $A_1,\ldots,A_t$ for which $G\vDash \varphi(A_1,\ldots,A_t)$.
\end{definition}
We now restate a powerful extension of Courcelle's theorem for such optimization problems.
\begin{theorem}[Arnborg, Lagergren and Seese \cite{ArnborgLagergrenSeese91}]\label{thm:linemsol}
For any linear extended MSO extremum problem $P$ and any class $K$ of simple graphs with universally bounded treewidth, it is possible to solve $P$ for graphs $G\in K$ in time $O(|G|)$ under the uniform cost measure.\footnote{The uniform cost measure assumes that all elementary arithmetic operations run in constant time.}
\end{theorem}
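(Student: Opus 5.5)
The plan is the standard ``finite-state over a tree decomposition'' argument behind Courcelle's theorem, strengthened by carrying, for each finite state, the best achievable value of the linear objective.

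\textbf{Step 1: decomposition.} Fix $P=(\varphi(A_1,\ldots,A_t),f_1,\ldots,f_m,g)$ and let $k$ bound the treewidth on $K$. I first invoke Bodlaender's algorithm \cite{bodlaender}, which for fixed $k$ runs in $O(|G|)$ time and returns a width-$k$ tree decomposition. I then convert it, as in \cref{sec:bddtw}, into an extended nice tree decomposition with $O(k|G|)=O(|G|)$ nodes. For each node $t$ I view $G_t$ as a \emph{boundaried graph}: its distinguished boundary is the bag $X_t$, with at most $k+1$ labelled boundary vertices. Partial solutions are the restrictions of the set variables $A_1,\ldots,A_t$ to the vertices and edges of $G_t$.

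\textbf{Step 2: finite types and composition.} Let $q$ be the quantifier rank of $\varphi$. Two boundaried graphs carrying interpretations of $A_1,\ldots,A_t$ are equivalent if they satisfy the same MSO formulas of rank at most $q$ in the vocabulary with $k+1$ boundary constants and $t$ unary set predicates. Up to logical equivalence there are only finitely many such formulas, so there are finitely many types, and their number depends only on $q$, $t$ and $k$.

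The key lemma is a Feferman--Vaught style compositionality statement. The type after each operation of the nice decomposition (introduce vertex, introduce edge, forget, join, i.e.\ gluing along the boundary) is a function only of the children's types and of the membership of the new element in each $A_i$. I would prove this with Ehrenfeucht--Fra\"iss\'e games for extended MSO with edge sets. Duplicator composes her strategies on the two parts: a set move on the glued graph is split into its restrictions to the two parts, and the shared boundary is fixed by the constants.

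This lemma is where I expect the main work. The obstacle is to handle edge-set quantification carefully, and to make sure that the boundary constants suffice to synchronise the two games at the join node.

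\textbf{Step 3: dynamic program with values.} At node $t$, for each type $\tau$ I store the maximum, over partial interpretations of type $\tau$, of the vector of partial evaluations $|A_i\cap G_t|_j$ combined through $g$. Because $g$ is linear and each $|A_i|_j$ is a sum over elements, the objective is additive over disjoint pieces. To avoid double counting boundary vertices at join nodes, I charge each vertex's weight at its forget node and each edge's weight at its introduce-edge node; every element is forgotten or introduced exactly once. I also store the membership pattern of the boundary vertices, which has finitely many options and is encoded in the type anyway.

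The recurrences then take the following form.
\begin{itemize}
\item At introduce, forget and edge nodes, each entry is a maximum over finitely many (child type, membership choice) pairs, plus the linear contribution of the element being charged.
\item At join nodes, the entry for $\tau$ is the maximum of $\mathrm{val}_1(\tau_1)+\mathrm{val}_2(\tau_2)$ over all pairs $(\tau_1,\tau_2)$ whose composition is $\tau$.
\end{itemize}
Correctness follows from Step 2. Restricted to $G_t$, an optimal global solution has some type $\tau$. Replacing that partial solution by the stored maximiser of type $\tau$ preserves the satisfaction of $\varphi$ at the root, and by additivity it does not decrease the objective.

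\textbf{Step 4: running time.} At the root, $X_q=\emptyset$, and I take the maximum over types entailing $\varphi$. Each node performs a number of arithmetic operations that depends only on $P$ and $k$, and there are $O(|G|)$ nodes. Under the uniform cost measure the total time is therefore $O(|G|)$.
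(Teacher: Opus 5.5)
Your outline is correct, but the paper contains no proof of this statement to compare with: it is imported from \cite{ArnborgLagergrenSeese91} and used as a black box in \cref{sec:mso-alg}. The natural comparison is therefore with the original argument.

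\textbf{The original route.} To my recollection, Arnborg, Lagergren and Seese argue by interpretation, starting from a \emph{given} decomposition.
\begin{itemize}
\item The graph with its width-$k$ decomposition is encoded as a finitely labelled tree.
\item Each extended MSO formula, edge-set variables included, is translated into MSO over that tree.
\item The Thatcher--Wright/Doner theorem turns the translated formula into a finite tree automaton over an alphabet enlarged by membership bits.
\item The extremum is computed by keeping, for each tree node and automaton state, the best value of the linear objective.
\end{itemize}

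\textbf{Your route.} You obtain the finite-state structure directly, from rank-$q$ types of boundaried graphs and a Feferman--Vaught gluing lemma. The optimisation layer is the same in both: per-state maxima, each element charged once, and additivity from linearity of $g$.

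Your gluing lemma does work with the synchronisation you anticipate. Because the boundary is named by constants, Duplicator must answer boundary elements by the matching constants and non-boundary elements by non-boundary ones. Her set answers in the two component games therefore agree on $X_t$ and can be united. Moreover, no atomic fact relates non-boundary elements of different sides, since $X_t$ separates and each edge is introduced once.

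\textbf{Trade-offs.} Your route stays inside the graph setting and avoids the tree encoding and automata theory. The automata route gives effectiveness for free. As written, you only show that finite transition tables on types \emph{exist} for each fixed $P$. That suffices for the statement, but not for the paper's subsequent claim that the constant is a computable $f(k,|\varphi|)$. To get computability, note that the tables can be computed by closing the leaf type under the four operations on concrete representatives, and determining their rank-$q$ Hintikka types by brute-force model checking.

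Finally, your Step~1 via \cite{bodlaender} is essential rather than cosmetic. It is what yields $O(|G|)$ for the class $K$ as the statement is phrased, without a decomposition being supplied.
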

In particular, when $\tw(G)$ is the treewidth of graph $G$, then \cref{thm:linemsol} gives a time complexity of $O(f(k,|\varphi|)\cdot |G|)$, where $f$ is some computable function.

\subsection{MSO Formulation of REIC}
\label{sec:mso-alg}

We begin by giving a simple MSO formula $\text{Card}_{=p}(X)$ expressing that a set $X$ has cardinality $p$ for any nonnegative integer $p$. First, consider the following formula for $|X|\geq p$,
\begin{align*}
\text{Card}_{\geq p}(X)\equiv \exists& a_1\in X\ \exists a_2\in X\ldots\  \exists a_p\in X\\
&\Bigl[(a_1\neq a_2)\wedge (a_1\neq a_3)\wedge\cdots \wedge (a_{p-1}\neq a_p)\Bigr] \,.
\end{align*}
We can therefore express $|X|=p$ by,
\[
\text{Card}_{=p}(X)\equiv \text{Card}_{\geq p}(X)\wedge \neg \text{Card}_{\geq p+1}(X)\,.
\]
We now define the satisfaction formulation for $r$-edge interdiction covering. Let $G=(V,E)$ be a simple graph with facilities $S$ and customers $V\setminus S$ carrying weight $w:V\setminus S\rightarrow \mathbb{Q}_{\geq 0}$. For budget $r$, interdiction strategy $R$, and disconnected customers $Y$, we define REIC-sat as follows:
\begin{boxtext}
{\emph{Is $R$ a set of at most $r$ edges such that, after deleting $R$ from $G$, each customer $v\in Y$ is disconnected from every facility?}}
\end{boxtext}
Recall that $C_G(v,R)=1$ asserts that $v$ is connected to some facility after $R$ is removed from $G$, and $C_G(v,R)=0$ otherwise. 
To express REIC-sat in MSO, we first define a formula for $C_G(v,R)$, building upon a simple MSO formula for connected component,
\begin{align*}
  C_G(v,R)\equiv \exists &s\in S\,\exists X\subseteq V\Bigl( v\in X\wedge s\in X \wedge \\&\Bigl[\forall u\in X\,\forall w\in V\bigl((u,w)\in E\setminus R\to w\in X\bigr) \Bigr]\Bigr)\,.
\end{align*}
Specifically, we check whether there exists a set of vertices \(X \subseteq V\) such that: (i) \(v \in X\) (ii) \(s \in X\) for some facility $s\in S$; (iii) \(X\) is a maximal connected component. If such a set \(X\) exists, then $v$ is still connected to a facility in $G\setminus R$, so \(C_G(v,R) = 1\); otherwise, there is no facility in the connected component of $v$ in $G\setminus R$, so $C_G(v,R)=0$. 

We now give our MSO formulation for REIC-sat by the formula $\varphi_r(R,Y)$ with free variables $R$ and $Y$, for fixed $r\in\mathbb{Z}^+$,
\[ 
  \varphi_{r}(R,Y)\equiv\Bigl(\text{Card}_{=r}(R)\Bigr)\;\wedge\;\Bigl(\forall v\in Y \left(\neg C_G(v,R)\right)\Bigr)\,.
\]
To complete our definition of REIC as a linear extended MSO extremum problem, we define evaluation function $f_1$ as well as linear function $g$,
\[
f_1=\begin{cases}
    w(v),&\text{if }a=v\in V\setminus S\\
    0, &\text{o.w.}
\end{cases}\quad;\quad g(|R|_1,|Y|_1)=|R|_1+|Y|_1\,.
\]
By our definition of $f_1$, we have that $g(|R|_1,|Y|_1)=\sum_{v\in Y}w(v)$; as $Y$ is a set of disconnected customers under interdiction strategy $R$, we have our covering objective function as desired. Applying \cref{thm:linemsol}, for any graph $G$ of bounded treewidth and fixed budget $r$, we can maximize the covering objective function of $Y$ subject to $R$ disconnecting customers $Y$ and $|R|=r$ in linear time. Therefore, when $r$ is fixed, we have a linear-time algorithm for REIC on rational-weighted graphs of bounded treewidth.


\subparagraph*{Dependence on $r$:} The MSO formula $\varphi_r$ for REIC-sat relies on the formula $\text{Card}_{=r}$ for verifying the interdiction strategy satisfies the budget constraint. 
While $|\text{Card}_{=r}|$ has quadratic dependence on $r$ in our naïve construction, it is clear that any construction for $\text{Card}_{=r}$, and hence for $\varphi_r$, has formula size dependent on~$r$.
\end{appendices}

\end{document}